\newcommand{\R}{\mathbb{R}}
\newcommand{\Z}{\mathbb{Z}}
\newcommand{\E}{\mathbb{E}}
\newcommand{\F}{\mathbb{F}}
\newcommand{\ket}[1]{| #1 \rangle}
\newcommand{\bra}[1]{\langle #1|}
\newcommand{\ip}[2]{\langle #1|#2 \rangle}
\newcommand{\bracket}[3]{\langle #1|#2|#3 \rangle}
\newcommand{\be}{\begin{equation}}
\newcommand{\ee}{\end{equation}}
\newcommand{\bea}{\begin{eqnarray}}
\newcommand{\eea}{\end{eqnarray}}
\newcommand{\bes}{\begin{equation*}}
\newcommand{\ees}{\end{equation*}}
\newcommand{\beas}{\begin{eqnarray*}}
\newcommand{\eeas}{\end{eqnarray*}}
\newtheorem{thm}{Theorem}
\newtheorem*{thm*}{Theorem}
\newtheorem{cor}[thm]{Corollary}
\newtheorem{lem}[thm]{Lemma}
\newtheorem{fact}[thm]{Fact}
\newtheorem*{lem*}{Lemma}
\newtheorem{prop}[thm]{Proposition}
\theoremstyle{definition}
\newtheorem{dfn}{Definition}
\newcommand{\boxdfn}[2]{
\begin{figure}[h]
\begin{center}
\noindent \framebox{
\begin{minipage}{11cm}
\begin{dfn}[#1]
\ \\ \\
#2
\end{dfn}
\end{minipage}
}
\end{center}
\end{figure}
}
\begin{document}

\title{On exact quantum query complexity}

\author{Ashley Montanaro\footnote{{\tt am994@cam.ac.uk}}, Richard Jozsa and Graeme Mitchison\\[11pt] {\small Centre for Quantum Information and Foundations, DAMTP, University of Cambridge, UK.}}

\maketitle

\begin{abstract}
We present several families of total boolean functions which have exact quantum query complexity which is a constant multiple (between $1/2$ and $2/3$) of their classical query complexity, and show that optimal quantum algorithms for these functions cannot be obtained by simply computing parities of pairs of bits. We also characterise the model of nonadaptive exact quantum query complexity in terms of coding theory and completely characterise the query complexity of symmetric boolean functions in this context. These results were originally inspired by numerically solving the semidefinite programs characterising quantum query complexity for small problem sizes. We include numerical results giving the optimal success probabilities achievable by quantum algorithms computing all boolean functions on up to 4 bits, and all symmetric boolean functions on up to 6 bits.
\end{abstract}


\section{Introduction}

Many important quantum algorithms operate in the query complexity model. In this model, the quantity of interest is the number of oracle queries to the input $x \in \{0,1\}^n$ required to compute some (possibly partial) function $f(x)$. Many functions $f$ are now known to admit quantum speed-up in the case where the algorithm is allowed a constant probability of error, and the model of bounded-error quantum query complexity is now relatively well understood. Intriguingly, the model of exact quantum query complexity, where the quantum algorithm must succeed with certainty, seems to be more mysterious.

If $f$ is allowed to be a partial function, it is known that there can be an exponential separation between exact quantum and exact classical query complexity~\cite{deutsch92}, and even between exact quantum and bounded-error classical query complexity~\cite{brassard97}. There are also examples of total functions where exact quantum algorithms allow a speed-up over classical algorithms. In particular, it is known that the parity of $n$ bits can be computed exactly using only $\lceil n/2 \rceil$ quantum queries~\cite{cleve98a,farhi98a}, while any classical algorithm (exact or randomised) must make exactly $n$ queries. However, prior to this work this was the best quantum speed-up known for the exact computation of total boolean functions. Even worse, to the authors' knowledge this was essentially the {\em only} exact quantum speed-up known. Some years ago, Hayes, Kutin and van Melkebeek~\cite{hayes02} gave a quantum algorithm that computes the majority function on $n$ bits exactly using only $n + 1 - w(n)$ queries, where $w(n)$ is the number of 1s in the binary expansion of $n$, but the only quantum ingredient in this algorithm is computing the parity of 2 bits using 1 query. The same applies to quantum algorithms presented by Dubrovska and Mischenko-Slatenkova~\cite{dubrovska06}, and also algorithms by Vasilieva~\cite{vasilieva07,vasilieva09}. In the other direction, it is known that the separation between quantum and classical exact query complexity can be at most cubic~\cite{midrijanis04}, whereas the best known relationship between bounded-error quantum and classical query complexity is a 6th power~\cite{beals01}.

The question of whether the factor of 2 speed-up obtained for computing parity can be beaten is of particular interest because any constant factor speed-up greater than a factor of 2 would in fact give an example of an {\em asymptotic} quantum-classical separation for total functions in the exact model\footnote{We would like to thank Scott Aaronson for stressing this point to us.}, via a construction similar to the Recursive Fourier Sampling problem of Bernstein and Vazirani~\cite{bernstein97}. Indeed, given a boolean function $f$ on $n$ bits, let $f^{(2)}$ be the function defined by $f^{(2)}(x) = f(f(x_1), \dots, f(x_n))$, where $x_1,\dots,x_n$ are $n$-bit strings; and more generally, for $k > 2$, let $f^{(k)}$ be $f$ applied to $n$ independent instances of $f^{(k-1)}$. Then any quantum algorithm which computes $f$ exactly using $q$ queries can be applied recursively to obtain a quantum algorithm computing $f^{(k)}$ exactly using $O((2q)^k)$ queries (the factor of 2 arises from the need to uncompute ``garbage'' left over from each use of the algorithm~\cite{aaronson03}). However, if every exact classical algorithm which computes $f$ exactly must make at least $d$ queries, one can show that any exact classical algorithm for $f^{(k)}$ must make at least $d^k$ queries. If $q < d/2$, this would imply an asymptotic separation.

{\bf Note added.} Following the completion of this work, in a breakthrough result, Ambainis has demonstrated such an asymptotic separation between quantum and classical exact query complexity for a total function~\cite{ambainis12b}. Also, very recently Ambainis, Iraids and Smotrovs~\cite{ambainis13} have developed optimal exact quantum algorithms to determine whether an $n$-bit string has Hamming weight exactly $k$, and to determine whether an $n$-bit string has Hamming weight at least $k$, verifying a conjecture in Section \ref{sec:open}.


\subsection{Our results}

We show that the model of exact quantum query complexity is richer than it may have hitherto appeared. Our results can be divided into analytical and numerical parts. On the analytical side, the main results are as follows.

\begin{itemize}
\item We present several new families of boolean functions $f$ for which the exact quantum query complexity of $f$ is a constant multiple (between $1/2$ and $2/3$) of its classical query complexity, and we show that optimal quantum algorithms for these functions cannot be obtained by simply computing parities of pairs of bits. These separations are based on concatenating small separations obtained for functions on small numbers of bits; indeed, we give optimal exact quantum query algorithms for every boolean function on 3 bits.

\item We give a simple and explicit quantum algorithm for determining whether a 4-bit string has Hamming weight 2, using only 2 queries. Again, this cannot be done merely by computing parities of pairs of input bits. More generally, we give an exact algorithm which distinguishes between inputs of Hamming weight $n/2$ and Hamming weight in the set $\{0,1,n-1,n\}$, for all even $n$, using 2 queries. This generalises the well-known Deutsch-Jozsa problem~\cite{deutsch92} of distinguishing Hamming weight $n/2$ from Hamming weight in $\{0,n\}$.

\item We characterise the model of {\em nonadaptive} exact quantum query complexity in terms of a coding-theoretic quantity. In this setting, all queries to the input must be made up front, in parallel. In contrast to the classical case, there exist non-trivial quantum algorithms in this model. Using our characterisation result, we completely determine the nonadaptive exact quantum query complexity of symmetric boolean functions.
\end{itemize}

These analytical results were inspired by numerical investigations in which we evaluated the best success probability achievable by quantum algorithms computing all boolean functions on up to 4 bits, and all symmetric boolean functions on up to 6 bits. This was achieved using the semidefinite programming approach of Barnum, Saks and Szegedy~\cite{barnum03} to formulate semidefinite programs (SDPs) giving the precise optimal success probability for quantum algorithms using up to $k$ queries, for all $k < n$. We then solved these SDPs numerically using the CVX package~\cite{cvx} for Matlab; the results are given in Section \ref{sec:smallfns} and Appendix \ref{sec:numerical}. Given an SDP solution, one can then write down an {\em explicit} quantum query algorithm with the same parameters; we discuss how this can be done in Section \ref{sec:sdp}. Our analytical results were based on inspecting these algorithms.

Some further highlights from the numerical results are as follows.

\begin{itemize}
\item We conjecture that the exact quantum query complexity of the problem of determining whether an $n$-bit string has Hamming weight exactly $k$ is precisely $\max\{k,n-k\}$. We also conjecture that determining whether an $n$-bit string has Hamming weight at least $k$ has exact quantum query complexity $\max\{k,n-k+1\}$ for $k\ge 1$. Both conjectures hold numerically for all $n\le 6$.

\item In particular, we present strong numerical evidence that the algorithm of Hayes, Kutin and van Melkebeek~\cite{hayes02} for the majority function is not always optimal, by showing that there exists an exact quantum algorithm for computing the majority function on 5 bits using only 3 queries, while their algorithm would require 4 queries.

\item We show numerically that all boolean functions on up to 4 bits, with the exception of functions equivalent to the AND function, have an exact quantum query algorithm using at most 3 queries.

\item We show numerically that no boolean function on up to 5 bits has exact quantum query complexity strictly less than half its exact classical query complexity.
\end{itemize}

Note that H\o yer, Lee and \v{S}palek have previously solved related SDPs (known as the adversary and generalised adversary bounds) for all boolean functions on up to 4 bits~\cite{hoyer07}. These results are also included in~\cite{reichardt08}. These SDPs give lower bounds on bounded-error quantum query complexity but do not characterise it exactly, although the generalised adversary bound does so up to a constant factor~\cite{reichardt11}.


\subsection{Organisation}

The remainder of this paper is organised as follows. After formalising some definitions in Section \ref{sec:def}, in Section \ref{sec:sep} we move on to techniques for finding separations between exact quantum query algorithms, classical algorithms, and quantum algorithms computing parities of input bits. We then discuss in Section \ref{sec:sdp} how the Barnum-Saks-Szegedy SDP can be solved for small problems to give explicit quantum query algorithms. Section \ref{sec:exact2} gives our algorithm for determining whether a 4-bit input has Hamming weight 2. In Section \ref{sec:smallfns} we give optimal exact quantum query algorithms, found by semidefinite programming, for every boolean function on 3 bits. Our results characterising nonadaptive exact quantum query complexity are in Section \ref{sec:nonadapt}, after which we conclude with a discussion of open problems. Two appendices detail our numerical results (including computation of exact quantum query complexity for all 4-bit boolean functions, and all symmetric functions on up to 6 bits), and also give example CVX source code.



\section{Definitions}
\label{sec:def}


\subsection{Generalities and boolean functions}

For any bit-string $x$, $|x|$ will denote the Hamming weight of $x$, and $\bar{x}$ will denote NOT$(x)$. $e_i$ will denote the bit-string with 1 in the $i$'th position, and 0 elsewhere. We use the convention [X] for a quantity which evaluates to 1 if the statement X is true, and 0 otherwise. We will be interested in boolean functions $f:\{0,1\}^n \rightarrow \{0,1\}$, and particularly interested in the following families of functions, all on $n$ bits. PARITY is the function $\text{PARITY}(x) = [|x| \text{ is odd}]$. MAJ is the majority function where MAJ$(x) = [|x| \ge n/2]$. The $\text{EXACT}_k$ function is defined by $\text{EXACT}_k(x) = [|x| = k]$. NAE (``not-all-equal'') evaluates to 0 if all the input bits are equal, otherwise evaluates to 1. Finally, the threshold function $\text{Th}_k(x)$ evaluates to 1 if and only if $|x| \ge k$. All of these are examples of {\em symmetric} boolean functions; a boolean function $f$ is said to be symmetric if $f(x)$ only depends on $|x|$. A non-symmetric function on 3 bits which we will consider is SEL, where SEL$(x_1,x_2,x_3)$ is defined to be equal to $x_2$ if $x_1=0$, and equal to $x_3$ if $x_1=1$.

We say that two boolean functions $f$ and $g$ are {\em isomorphic} if they are equal up to negations and permutations of the input variables, and negation of the output variable. This relationship is sometimes known as NPN-equivalence.


\subsection{Query complexity model}

An exact classical query algorithm to compute a boolean function $f:\{0,1\}^n \rightarrow \{0,1\}$ is described by a decision tree (see e.g.\ \cite{buhrman02}). A decision tree $T$ is a rooted binary tree where each internal vertex has exactly two children, each internal vertex is labelled with a variable $x_i$, $1 \le i \le n$, and each leaf is labelled with 0 or 1. $T$ computes a boolean function as follows: starting with the root, the variable labelling each vertex is queried, and dependent on whether the answer is 0 or 1 the left or right subtree is evaluated. When a leaf is reached, the output is the label of that leaf. The depth of $T$ is the maximal length of a path from the root to a leaf (i.e.\ the worst-case number of queries used on any input). The minimal depth over all decision trees computing $f$ is the exact classical query complexity (aka decision tree complexity) $D(f)$.

We follow what is essentially the standard quantum query complexity model (see e.g.\ \cite{barnum03,hoyer05}). A quantum query algorithm to compute a boolean function $f:\{0,1\}^n \rightarrow \{0,1\}$ is specified by a sequence of unitary operators $U_0,\dots,U_t$ which do not depend on the (initially unknown) input $x$. These unitaries are interspersed with oracle calls $O_x$ (which do depend on the input $x$). The final state of an algorithm that makes $t$ queries, before the final measurement, is given by $U_t O_x U_{t-1} O_x \dots O_x U_0 \ket{0}$. The overall Hilbert space $\mathcal{H}$ used by the quantum query algorithm is split into three subspaces $\mathcal{H}_{\text{in}} \otimes \mathcal{H}_{\text{work}} \otimes \mathcal{H}_{\text{out}}$. For boolean functions, $\mathcal{H}_{\text{out}}$ is a single qubit, whereas the workspace $\mathcal{H}_{\text{work}}$ is of arbitrary size. The oracle $O_x$ acts only on $\mathcal{H}_{\text{in}}$ by mapping $\ket{i} \mapsto (-1)^{x_i} \ket{i}$ for some hidden bit string $x$. $\mathcal{H}_{\text{in}}$ is $n+1$ dimensional and indexed by basis vectors $\ket{0},\dots,\ket{n}$; a query to $x_0$ always returns 0. The final step of the algorithm is always simply to measure the $\mathcal{H}_{\text{out}}$ register and return the outcome. We say that the algorithm computes $f$ within error $\epsilon$ if, on input $x$, the algorithm returns $f(x)$ with probability at least $1-\epsilon$ for all $x$. The exact quantum query complexity $Q_E(f)$ is the minimum number of queries used by any quantum algorithm which computes $f(x)$ exactly for all $x$.

Note that if boolean functions $f$ and $g$ are isomorphic, $D(f)=D(g)$ and $Q_E(f)=Q_E(g)$.


\section{Separating exact quantum and classical query complexity}
\label{sec:sep}

We observe that a fixed function demonstrating a separation between exact quantum and classical query complexity, even a small additive constant, can give rise to a constant factor multiplicative separation for an asymptotically growing family of functions. For a boolean function $f:\{0,1\}^n \rightarrow \{0,1\}$, let $\deg(f)$ be the degree of the multilinear polynomial representing $f$ over the reals (see e.g.~\cite{nisan94,buhrman02} for a precise definition). Then we have the following easy proposition.

\begin{prop}
\label{prop:growing}
Let $f:\{0,1\}^k \rightarrow \{0,1\}$ be a boolean function on $k$ bits such that $Q_E(f) = q$ and $\deg(f) = d$. Also let $g:\{0,1\}^n \rightarrow \{0,1\}$ be a boolean function such that $\deg(g)=n$. Define the family of functions $f_n:\{0,1\}^{nk} \rightarrow \{0,1\}$ as follows: divide the $nk$ input bits into blocks $b_1,\dots,b_n$ of $k$ bits each, and set $f_n(x_1,\dots,x_{nk}) = g(f(b_1),f(b_2),\dots,f(b_n))$. Then $d n / 2 \le Q_E(f_n) \le q n$ and $D(f_n) \ge dn$.
\end{prop}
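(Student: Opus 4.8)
The plan is to obtain all three bounds from a single degree computation, namely $\deg(f_n) = dn$, combined with two standard facts relating polynomial degree to query complexity: that $Q_E(h) \ge \deg(h)/2$ and $D(h) \ge \deg(h)$ for every boolean function $h$ (see e.g.~\cite{beals01,buhrman02}). The upper bound is established separately by an explicit composed algorithm.

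For the upper bound $Q_E(f_n) \le qn$, I would run the optimal exact quantum algorithm for $f$ on each of the $n$ blocks in turn. Since that algorithm is exact, measuring its output register after the $i$th run yields $f(b_i)$ with certainty, using $q$ queries per block and $qn$ queries in total (each such run queries only the global oracle restricted to the indices of block $b_i$). Having obtained $f(b_1),\dots,f(b_n)$ deterministically, the value $f_n(x)=g(f(b_1),\dots,f(b_n))$ is computed with no further queries.

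The heart of the argument is the claim $\deg(f_n) \ge dn$. I would write $g$ as its multilinear polynomial in variables $y_1,\dots,y_n$; since $\deg(g)=n$, the top monomial $y_1 y_2 \cdots y_n$ appears with a nonzero coefficient $c$, while every other monomial of $g$ uses at most $n-1$ of the $y_i$. Substituting $y_i = f(b_i)$, where each $f(b_i)$ is a multilinear polynomial of degree $d$ on the variables of block $b_i$, the top monomial contributes $c\prod_{i=1}^n f(b_i)$. Because the blocks are disjoint, this product has degree exactly $dn$: choosing a degree-$d$ monomial $m_i$ in each $f(b_i)$, the products $m_1\cdots m_n$ are distinct multilinear monomials living on disjoint variable sets, with coefficients equal to the nonzero products of the individual coefficients, so no cancellation occurs. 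Meanwhile every other monomial of $g$ produces, after substitution, a polynomial of degree at most $d(n-1)<dn$, and hence cannot cancel the degree-$dn$ part. Therefore the degree-$dn$ component of $f_n$ is nonzero and $\deg(f_n)=dn$.

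Finally I would invoke the two standard relations. The acceptance probability of a $t$-query quantum algorithm is a real multilinear polynomial in $x$ of degree at most $2t$; for an exact algorithm this polynomial equals $f_n$ identically, forcing $dn=\deg(f_n)\le 2\,Q_E(f_n)$, i.e.\ $Q_E(f_n)\ge dn/2$. Likewise any decision tree of depth $D(f_n)$ yields a representing polynomial of degree at most $D(f_n)$, so $D(f_n)\ge \deg(f_n)=dn$. I expect the only delicate step to be the no-cancellation claim in the degree computation; the disjointness of the blocks is exactly what makes it go through, and it is precisely what forces the full degree $n$ of $g$ (rather than merely $\deg(g)\ge 1$) to enter as a hypothesis.
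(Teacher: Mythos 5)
Your proposal is correct and follows essentially the same route as the paper: the composed exact algorithm for the upper bound, the identity $\deg(f_n)=\deg(f)\deg(g)=dn$, and the standard inequalities $D(h)\ge\deg(h)$ and $Q_E(h)\ge\deg(h)/2$ for the lower bounds. The only difference is that you spell out the no-cancellation argument behind $\deg(f_n)=dn$ (using disjointness of the blocks and the full degree of $g$), which the paper simply asserts; your justification of that step is sound.
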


\begin{proof}
An exact quantum query algorithm for $f_n$ using $q n$ queries can be obtained simply by computing $f(b_i)$ for each block $b_i$ individually, and then computing $g(f(b_1),\dots,f(b_n))$ without any further queries, so $Q_E(f_n) \le q n$. On the other hand, $\deg(f_n) = \deg(f) \deg(g) = d n$. As $D(f_n) \ge \deg(f_n)$~\cite{nisan94} and $Q_E(f_n) \ge \deg(f_n)/2$~\cite{beals01}, $D(f_n) \ge d n$ and $Q_E(f_n) \ge dn/2$.
\end{proof}

Natural examples of functions $g$ to which Proposition \ref{prop:growing} can be applied are AND and OR. Thus an example of an exact query complexity separation we obtain from our results on small boolean functions is the following:

\begin{thm}
Let EXACT$_2^\ell$ be the boolean function on $4\ell$ bits defined as follows. Split the input $x$ into consecutive blocks $b_1,\dots,b_\ell$ containing 4 bits each, and set EXACT$_2^\ell(x_1,\dots,x_{4\ell})=1$ if each block $b_i$ contains exactly two 1s. Then $Q_E(\text{EXACT}_2^\ell) = 2\ell$ and $D(\text{EXACT}_2^\ell) = 4\ell$.
\end{thm}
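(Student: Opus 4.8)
The plan is to recognise $\text{EXACT}_2^\ell$ as an instance of the family $f_n$ in Proposition \ref{prop:growing}, obtained by taking $f = \text{EXACT}_2$ on $k = 4$ bits and $g = \text{AND}$ on $n = \ell$ bits (so that $g(f(b_1),\dots,f(b_\ell)) = 1$ exactly when every block has weight $2$), and then to verify the numerical hypotheses of that proposition and substitute.

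First I would pin down the two parameters attached to $f = \text{EXACT}_2$. Since $\text{EXACT}_2$ is symmetric, its multilinear representing polynomial is a univariate polynomial $p$ in $s = |x|$ satisfying $p(2)=1$ and $p(s)=0$ for $s \in \{0,1,3,4\}$; as $p$ must vanish at four distinct points, $p(s) = \tfrac{1}{4} s(s-1)(s-3)(s-4)$ and hence $d := \deg(\text{EXACT}_2) = 4$. For the query parameter I would invoke the explicit $2$-query exact quantum algorithm for $\text{EXACT}_2$ presented in Section \ref{sec:exact2}, which gives $Q_E(\text{EXACT}_2) \le 2$, so $q = 2$ suffices in the upper-bound direction. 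I would also record that $g = \text{AND}$ on $\ell$ bits has representing polynomial $\prod_i y_i$, so $\deg(\text{AND}) = \ell$, matching the hypothesis $\deg(g) = n$ under the identification $n \mapsto \ell$.

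With these in hand, Proposition \ref{prop:growing} applied to $f = \text{EXACT}_2$ and $g = \text{AND}$ gives $2\ell = dn/2 \le Q_E(\text{EXACT}_2^\ell) \le qn = 2\ell$, forcing $Q_E(\text{EXACT}_2^\ell) = 2\ell$, together with $D(\text{EXACT}_2^\ell) \ge dn = 4\ell$. The reverse decision-tree bound is immediate, since querying all $4\ell$ input bits computes the function, so $D(\text{EXACT}_2^\ell) \le 4\ell$ and hence $D(\text{EXACT}_2^\ell) = 4\ell$.

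The only genuinely nontrivial ingredient is the upper bound $Q_E(\text{EXACT}_2) \le 2$: the lower bound on $Q_E(\text{EXACT}_2^\ell)$ and the entire computation of $D(\text{EXACT}_2^\ell)$ reduce to degree arguments already packaged inside the proposition, whereas exhibiting a $2$-query exact quantum algorithm that detects Hamming weight $2$ among $4$ bits — and which, as emphasised in the introduction, cannot be assembled merely from parities of pairs of bits — is where the real work resides. I expect that construction to be the main obstacle and would defer it to the dedicated treatment in Section \ref{sec:exact2}.
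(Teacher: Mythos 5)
Your proposal is correct and follows essentially the same route as the paper: the paper likewise deduces the theorem by applying Proposition \ref{prop:growing} with $f=\text{EXACT}_2$ on $4$ bits (noting $\deg(\text{EXACT}_2)=4$ and deferring $Q_E(\text{EXACT}_2)=2$ to Section \ref{sec:exact2}) and $g=\text{AND}$ on $\ell$ bits. Your explicit interpolation argument for the degree and the remark that only the upper bound $Q_E(\text{EXACT}_2)\le 2$ is needed are fine elaborations of what the paper leaves as ``easy to verify.''
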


\begin{proof}
It is easy to verify that the polynomial degree of the $\text{EXACT}_2$ function on 4 bits is equal to 4; we prove that $Q_E(\text{EXACT}_2) = 2$ in Section \ref{sec:exact2}. The claim then follows from Proposition \ref{prop:growing}.
\end{proof}


\subsection{Quantum algorithms based on parity queries}

It is well-known that quantum computers can evaluate the parity of two input bits exactly using only one query~\cite{cleve98a}. Thus a non-trivial class of exact quantum query algorithms consists of classical decision trees, each of whose vertices corresponds to a query either to an individual bit of the input, or to the parity of two input bits. We now give a lower bound on the number of queries used by such algorithms (indeed, a more general class of algorithms). Define {\em parity decision trees} to be the modification of decision trees where each internal vertex $v$ is labelled with a subset $S_v$ of the input variables. When $v$ is reached, the parity $\bigoplus_{i \in S_v} x_i$ is computed. If the answer is 0, the left subtree is evaluated; if the answer is 1, the right subtree is evaluated. Standard decision trees are simply the special case of parity decision trees where $|S_v|=1$ for all $v$, while decision trees based on the use of the quantum algorithm for PARITY are parity decision trees such that $|S_v| \le 2$ for all $v$.

We then have the following simple result.

\begin{prop}
\label{prop:f2degree}
Let $f:\{0,1\}^n \rightarrow \{0,1\}$ be a boolean function, and let $d$ be the degree of $f$ as an $n$-variate polynomial over $\F_2$. Then any parity decision tree which computes $f$ must have depth at least $d$.
\end{prop}

\begin{proof}
We will show by induction that any decision tree on parities which has depth $D$ gives rise to a degree $D$ polynomial over $\F_2$. The function computed by any such tree can be written as $p T_0 + (1+p)T_1$ for some degree 1 polynomial $p$ over $\F_2$ and decision trees $T_0$, $T_1$ of depth at most $D-1$. Therefore, the degree of the polynomial obtained by repeating this procedure recursively is at most $D$. If the tree computes $f$ on every input, this polynomial must be equal to $f$, and hence be degree $d$. Thus $D \ge d$.
\end{proof}


\section{Quantum query algorithms from semidefinite programming}
\label{sec:sdp}

In this section we discuss, following~\cite{barnum03}, how quantum query complexity can be evaluated as a semidefinite programming (SDP) problem, given in Definition \ref{dfn:sdp} below. In this definition, $\circ$ is the Hadamard (entrywise) product of matrices.

\boxdfn{Quantum query complexity primal SDP}{
\label{dfn:sdp}
\hspace{-0.5em}For a given boolean function $f:\{0,1\}^n \rightarrow \{0,1\}$ and a given integer $t$, find a sequence of $2^n$-dimensional real symmetric matrices $(M_i^{(j)})$, where $0 \le i \le n$ and $0 \le j \le t-1$, and $2^n$-dimensional real symmetric matrices $\Gamma_0$, $\Gamma_1$, satisfying the following constraints:
\bea
\label{eq:startconstr} \sum_{i=0}^n M_i^{(0)} &=& E_0\\
\label{eq:runconstr} \sum_{i=0}^n M_i^{(j)} &=& \sum_{i=0}^n E_i \circ M_i^{(j-1)}\;\text{(for $1 \le j \le t-1$)}\\
\label{eq:lastconstr} \Gamma_0 + \Gamma_1 &=& \sum_{i=0}^n E_i \circ M_i^{(t-1)}\\
\label{eq:outconstr0} F_0 \circ \Gamma_0 &\ge& (1-\epsilon)F_0\\
\label{eq:outconstr1} F_1 \circ \Gamma_1 &\ge& (1-\epsilon)F_1,
\eea
where $E_i$ is the matrix $\bracket{x}{E_i}{y} = (-1)^{x_i+y_i}$, and $F_0$ and $F_1$ are diagonal matrices defined by $\bracket{x}{F_z}{x} = 1$ if and only if $f(x) = z$, and $\bracket{x}{F_z}{x} = 0$ otherwise.
}

The following important characterisation will be the key to many of our results.

\begin{thm}[Barnum, Saks and Szegedy~\cite{barnum03}]
There is a quantum query algorithm that uses $t$ queries to compute a function $f:\{0,1\}^n \rightarrow \{0,1\}$ within error $\epsilon$ if and only if the SDP of Definition \ref{dfn:sdp} is feasible.
\end{thm}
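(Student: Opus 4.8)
The plan is to prove both directions of the equivalence by exhibiting an explicit correspondence between a $t$-query quantum algorithm and a feasible solution of the SDP in Definition \ref{dfn:sdp}. The central object connecting the two is the sequence of Gram matrices of the algorithm's state vectors, indexed by inputs $x$. Concretely, for a quantum query algorithm specified by unitaries $U_0,\dots,U_t$, I would track the state immediately after each oracle call and define, for each query step $j$, a collection of matrices $M_i^{(j)}$ recording the inner products between the components of the states $\ket{\psi_x^{(j)}}$ and $\ket{\psi_y^{(j)}}$ that have query index $i$ in the register $\mathcal{H}_{\text{in}}$. The role of the Hadamard product with $E_i$ (whose $(x,y)$ entry is $(-1)^{x_i+y_i}$) is exactly to encode the phase $(-1)^{x_i}(-1)^{y_i}$ picked up by the $\ket{i}$ component of $\ket{\psi_x}$ and $\ket{\psi_y}$ under the oracle $O_x$; thus the action of a query on the Gram matrix is precisely the map $M \mapsto \sum_i E_i \circ M_i$, which is what appears on the right-hand sides of (\ref{eq:runconstr}) and (\ref{eq:lastconstr}).

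For the forward direction (algorithm $\Rightarrow$ feasibility), I would start from the initial state $U_0\ket{0}$, which is independent of $x$, so its Gram matrix is the all-ones matrix $E_0$; decomposing it according to the query register gives constraint (\ref{eq:startconstr}). Applying a unitary $U_j$ does not change inner products, so the Gram matrix of $\sum_i \ket{\psi_x^{(j-1)}}$ after the query equals that before applying $U_j$, which yields the recurrence (\ref{eq:runconstr}). After the final query the total Gram matrix is split as $\Gamma_0 + \Gamma_1$ according to which value the output qubit would be measured to give, producing (\ref{eq:lastconstr}); the exactness (or $\epsilon$-error) of the algorithm, together with the fact that diagonal entries of a Gram matrix are squared norms of the relevant state components, translates directly into the measurement-success constraints (\ref{eq:outconstr0}) and (\ref{eq:outconstr1}). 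The key point making all these matrices real symmetric and, crucially, positive semidefinite is that they are genuine Gram matrices of real vectors — and here one uses the standard observation that an algorithm over $\C$ can be simulated by one over $\R$ at the cost of doubling the workspace, so we may assume all amplitudes are real.

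For the converse (feasibility $\Rightarrow$ algorithm), I would reverse the construction: given feasible matrices satisfying the PSD and trace constraints, I would use a Cholesky-type factorisation of each Gram matrix to manufacture state vectors $\ket{\psi_x^{(j)}}$ realising the prescribed inner products, and then invoke the fact that any two ordered sets of vectors with the same Gram matrix are related by a unitary (on a large enough space) to construct the unitaries $U_j$ that map one step's states to the next. The constraints (\ref{eq:startconstr})–(\ref{eq:lastconstr}) guarantee that the required unitaries exist and are input-independent, while (\ref{eq:outconstr0})–(\ref{eq:outconstr1}) guarantee the output measurement succeeds within error $\epsilon$. I expect the main obstacle to be bookkeeping rather than conceptual: carefully defining the block structure of the $M_i^{(j)}$ so that the Hadamard product with $E_i$ exactly matches the oracle's phase action, and verifying that the "same Gram matrix implies unitarily related" step can be applied consistently across all inputs $x$ simultaneously (i.e.\ a single $U_j$ works for every $x$), which is precisely what the constraint (\ref{eq:runconstr}) — an equality of full matrices, not merely of diagonal blocks — is there to ensure. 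Since this theorem is attributed to Barnum, Saks and Szegedy, I would cite \cite{barnum03} for the detailed verification and present only the correspondence above.
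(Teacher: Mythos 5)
Your proposal is correct and matches the paper's treatment: the paper likewise defers the forward direction to the cited reference and spells out only the converse, constructing states via $\ket{\psi_{x,i}^{(j)}} = \sqrt{M_i^{(j)}}\ket{x}$ (your Cholesky-type factorisation) and invoking Lemma \ref{lem:transition} (equal Gram matrices imply a single input-independent unitary) exactly as you describe, with the same observation that everything may be taken real. The only difference is presentational: the paper packages the Gram-matrix computation through the matrices $\Psi^{(j)}$ and $\Phi^{(j)}$ rather than entrywise, but the argument is identical.
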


Therefore, if one minimises $\epsilon$ subject to these semidefinite constraints, one obtains the lowest possible error that can be achieved by a quantum algorithm which computes $f$ using $t$ queries.


\subsection{A prescription for quantum algorithms}

It is perhaps not immediately obvious how, given a solution to the semidefinite program of Definition \ref{dfn:sdp}, to produce a quantum query algorithm with the same parameters. This was implicit in \cite{barnum03}; we now spell out explicitly how it can be done. We will use the following standard lemma from linear algebra (see e.g.\ \cite{horn85}).

\begin{lem}
\label{lem:transition}
Let $S = (\ket{\psi_i})$ and $T = (\ket{\phi_j} )$ be two sequences of $m$ vectors of the same dimension. Define $\Psi = \sum_i \ket{\psi_i}\bra{i}$, $\Phi = \sum_j \ket{\phi_j}\bra{j}$. Then there is a unitary $U$ such that $U \ket{\phi_i} = \ket{\psi_i}$ for all $i$ if and only if $\Psi^\dag \Psi = \Phi^\dag \Phi$. If such a $U$ exists, it can be written down as follows. Let $V$ and $W$ be any isometries satisfying $\Psi = V \sqrt{\Psi^\dag \Psi}$, $\Phi = W \sqrt{\Phi^\dag \Phi}$ (i.e.\ isometries occurring in polar decompositions of $\Psi$, $\Phi$), and complete $V$ and $W$ to unitary matrices $V'$ and $W'$. Then $U=V' (W')^\dag$.

\end{lem}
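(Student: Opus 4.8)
The plan is to reformulate the condition on $U$ as a single matrix identity and then recognise $\Psi^\dag \Psi$ as a Gram matrix. Reading off the $i$th column gives $\Phi\ket{i} = \ket{\phi_i}$ and $\Psi\ket{i} = \ket{\psi_i}$, so the requirement that $U\ket{\phi_i} = \ket{\psi_i}$ for all $i$ is equivalent to the matrix equation $U\Phi = \Psi$. A direct computation also gives $(\Psi^\dag\Psi)_{ij} = \ip{\psi_i}{\psi_j}$ and $(\Phi^\dag\Phi)_{ij} = \ip{\phi_i}{\phi_j}$, so the hypothesis $\Psi^\dag\Psi = \Phi^\dag\Phi$ is precisely the statement that the two sequences share the same Gram matrix.

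The necessity direction (the ``only if'') then takes a single line: if a unitary $U$ with $U\Phi = \Psi$ exists, then $\Psi^\dag\Psi = \Phi^\dag U^\dag U\Phi = \Phi^\dag\Phi$, using $U^\dag U = I$.

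The substance lies in the sufficiency direction together with the explicit formula, and here I would invoke the polar decomposition. Writing $P := \sqrt{\Psi^\dag\Psi} = \sqrt{\Phi^\dag\Phi}$ (equal by hypothesis), take isometries $V, W$ with $\Psi = VP$ and $\Phi = WP$, and complete them to unitaries $V', W'$ so that $V = V'\iota$ and $W = W'\iota$, where $\iota$ denotes the inclusion of the $m$-dimensional index space into the ambient space. Setting $U = V'(W')^\dag$, one verifies
\[ U\Phi = V'(W')^\dag W P = V'(W')^\dag W' \iota P = V'\iota P = VP = \Psi, \]
using $(W')^\dag W' = I$ and then $V = V'\iota$; as a product of unitaries, $U$ is itself unitary. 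This simultaneously establishes the ``if'' direction and the asserted form of $U$.

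I expect the main obstacle to be the careful handling of the polar decomposition when the vectors are linearly dependent, so that $P$ is singular. In that case the equation $\Psi = VP$ constrains $V$ only on $\operatorname{range}(P)$, and I would need to check that $V$ (and likewise $W$) can nevertheless be chosen as a genuine isometry --- padding the ambient dimension up to at least $m$ if necessary --- and then completed to a unitary $V'$ compatibly with $V = V'\iota$. Confirming that this residual freedom does not disturb the key identity $(W')^\dag W = \iota$ is the delicate bookkeeping step; once it is secured, the displayed computation goes through unchanged.
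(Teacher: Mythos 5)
Your argument is correct, and there is nothing in the paper to compare it against: the lemma is stated without proof and cited as standard linear algebra (Horn and Johnson), and the polar-decomposition route you take --- reducing to $U\Phi = \Psi$, disposing of necessity via $U^\dag U = I$, and verifying $U = V'(W')^\dag$ directly --- is precisely the standard argument that citation points to. The obstacle you flag at the end is not actually delicate: the identity $(W')^\dag W = (W')^\dag W' \iota = \iota$ follows immediately from $W = W'\iota$ and unitarity of $W'$, independently of which isometric extension of $W$ off $\operatorname{range}(P)$ was chosen (well-definedness and isometry on $\operatorname{range}(P)$ come from $\|Pu\| = \|\Phi u\|$, and the extension to a genuine isometry exists whenever the ambient dimension is at least $m$, which holds in the paper's application and can be arranged by padding otherwise).
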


Given a set of matrices $M_i^{(j)}$ as in Definition \ref{dfn:sdp}, one can derive an explicit quantum query algorithm completely mechanically, as follows. Use a workspace $\mathcal{H}_{\text{work}}$ of $n$ qubits, and ignore the output qubit for the time being. Let the initial state be $\ket{0}\ket{0}$, and let the state of the system at time $j$ (i.e.\ after $j$ queries have been made, and just before the $(j+1)$'st query is made) be $\ket{\psi_x^{(j)}} = \sum_{i=0}^n \ket{i}\ket{\psi_{x,i}^{(j)}}$, where $\ket{\psi_{x,i}^{(j)}}$ is a subnormalised state in the Hilbert space $\mathcal{H}_{\text{work}}$.

We will {\em define} the state at time $0 \le j \le t-1$ in terms of the matrices $M_i^{(j)}$ occurring in a solution to the SDP by setting
\be \label{eq:psixi} \ket{\psi_{x,i}^{(j)}} = \sqrt{M_i^{(j)}}\ket{x}, \ee
where $\sqrt{M_i^{(j)}}$ is the unique positive semidefinite square root of $M_i^{(j)}$. It is perhaps not immediately clear that following this prescription leads to $\ket{\psi_x^{(j)}}$ being normalised, let alone the sequence $\ket{\psi_x^{(0)}},\ket{\psi_x^{(1)}},\dots,\ket{\psi_x^{(t-1)}}$ corresponding to a valid quantum query algorithm for all $x$; however, we will now see that this is indeed the case.

For any $1 \le j \le t$, define
\[ \ket{\phi_x^{(j)}} = \sum_{i=0}^n (-1)^{x_i} \ket{i}\ket{\psi_{x,i}^{(j-1)}}, \]
and set $\ket{\phi_x^{(0)}} = \ket{0}\ket{0}$. Also define the matrices obtained by concatenating the vectors as columns,
\[ \Psi^{(j)} = \sum_{x \in \{0,1\}^n} \ket{\psi_x^{(j)}}\bra{x},\,\,\Phi^{(j)} = \sum_{x \in \{0,1\}^n} \ket{\phi_x^{(j)}} \bra{x}. \]
The vectors $\ket{\phi_x^{(j)}}$ represent the state of the system immediately {\em after} the $j$'th oracle call, and $\ket{\phi_x^{(0)}}$ is the initial state of the system. We would like to find unitaries $U_0,\dots,U_{t-1}$ mapping $\ket{\phi_x^{(j)}} \mapsto \ket{\psi_x^{(j)}}$ for all $x$. By Lemma \ref{lem:transition}, such a sequence of unitaries will exist if
\[ (\Psi^{(j)})^\dag \Psi^{(j)} = (\Phi^{(j)})^\dag \Phi^{(j)}. \]
But observe that for any $0 \le j \le t-1$,
\[ \bracket{x}{(\Psi^{(j)})^\dag \Psi^{(j)}}{y} = \ip{\psi_x^{(j)}}{\psi_y^{(j)}} = \sum_{i=0}^n \ip{\psi_{x,i}^{(j)}}{\psi_{y,i}^{(j)}} = \sum_{i=0}^n \bracket{x}{M_i^{(j)}}{y}, \]
so $(\Psi^{(j)})^\dag \Psi^{(j)} = \sum_{i=0}^n M_i^{(j)}$. Similarly, for any $1 \le j \le t$,
\beas
\bracket{x}{(\Phi^{(j)})^\dag \Phi^{(j)}}{y} &=& \ip{\phi_x^{(j)}}{\phi_y^{(j)}} = \sum_{i=0}^n (-1)^{x_i+y_i} \ip{\psi_{x,i}^{(j-1)}}{\psi_{y,i}^{(j-1)}}\\
&=&  \sum_{i=0}^n (-1)^{x_i+y_i} \bracket{x}{M_i^{(j-1)}}{y} = \bracket{x}{\left(\sum_{i=0}^n E_i \circ M_i^{(j-1)}\right)}{y}.
\eeas
As, by constraint (\ref{eq:runconstr}) in the SDP, $\sum_{i=0}^n M_i^{(j)} = \sum_{i=0}^n E_i \circ M_i^{(j-1)}$ for all $1 \le j \le t-1$, this implies by Lemma \ref{lem:transition} that for each $j \ge 1$ there exists a unitary $U_j$ such that $U_j\ket{\phi_x^{(j)}} = \ket{\psi_x^{(j)}}$ for all $x$, and this $U_j$ can be determined explicitly from Lemma \ref{lem:transition}. In the case $j=0$, $(\Phi^{(j)})^\dag \Phi^{(j)} = E_0$, and hence SDP constraint (\ref{eq:startconstr}) implies the existence of a $U_0$ such that $U_0\ket{\phi_x^{(0)}} = \ket{\psi_x^{(0)}}$ for all $x$.

The final constraint we need to satisfy is that the algorithm outputs the correct result. Define the final state of the system on input $x$ (just before the output qubit is measured) to be
\[ \ket{\gamma_x} = \ket{0}_{\text{in}}(\sqrt{\Gamma_0}\ket{x})_{\text{work}}\ket{0}_{\text{out}}  + \ket{0}_{\text{in}} (\sqrt{\Gamma_1}\ket{x})_{\text{work}}\ket{1}_{\text{out}}. \]
Then
\[ \ip{\gamma_x}{\gamma_y} = \bracket{x}{\Gamma_0}{y} + \bracket{x}{\Gamma_1}{y} = \bracket{x}{\left(\sum_{i=0}^n E_i \circ M_i^{(t-1)}\right)}{y} \]
by SDP constraint (\ref{eq:lastconstr}), so by a similar argument there exists a $U_t$ such that $U_t\ket{\phi_x^{(t)}} = \ket{\gamma_x}$ for all $x$. Measuring the output qubit gives the answer 0 with probability $\bracket{x}{\Gamma_0}{x}$, which by constraint (\ref{eq:outconstr0}) is at least $1-\epsilon$ when $f(x)=0$. Similarly, by constraint (\ref{eq:outconstr1}) we obtain the answer 1 with probability at least $1-\epsilon$ when $f(x)=1$.

Observe that we have some freedom in our choice of states $\ket{\psi_{x,i}^{(j)}}$; while eqn.\ (\ref{eq:psixi}) gives one choice which always works, it would suffice to pick any states such that $\ip{\psi_{x,i}^{(j)}}{\psi_{y,i}^{(j)}} = \bracket{x}{M_i^{(j)}}{y}$. In particular, if the rank of $M_i^{(j)}$ is upper bounded by $r$ for all $i,j$, one can choose states of dimension $r$ throughout. This would reduce the size of the $\mathcal{H}_{\text{work}}$ register from $n$ qubits to $\lceil \log_2 r \rceil$ qubits. Also observe that without loss of generality all states and unitaries occurring in a quantum query algorithm can be taken to be real.


\section{EXACT$_2$}
\label{sec:exact2}

We now give a simple and explicit quantum algorithm for evaluating the EXACT$_2$ function on 4 bits using only 2 quantum queries. This algorithm was originally inspired by numerically solving the SDP discussed in the previous section. The algorithm does not use any workspace (or even an output register), and hence operates solely on the 5-dimensional input register indexed by basis states $\{\ket{0},\dots,\ket{4}\}$. Define a unitary matrix $U$ by

\[
U=\frac{1}{2}\left( \begin{matrix} 
0 & 1 & 1 & 1 & 1\\ 
1 & 0 & 1 & \omega & \omega^2\\
1 & 1 & 0 & \omega^2 & \omega\\
1 & \omega & \omega^2 & 0 & 1\\
1 & \omega^2 & \omega & 1 & 0
\end{matrix} \right),
\]
where $\omega=e^{2 \pi i/3}$ is a complex cube root of 1. We begin in the state
\[ \ket{\psi}=\frac{1}{2}\sum_{i=1}^4 \ket{i} \]
and then apply $O_x$, then $U$, then $O_x$ again. Finally, we perform the measurement consisting of a projection onto the state $\ket{\psi}$ and its orthogonal complement. If the outcome is $\ket{\psi}$, we output 1, and otherwise 0.

The claim is that $V_x:=O_x U O_x$ leaves $\ket{\psi}$ unchanged, up to a phase factor, when $x$ has Hamming weight 2, and otherwise maps $\ket{\psi}$ into a subspace orthogonal to $\ket{\psi}$. To see that the claim is correct, note first that $U\ket{\psi}=\ket{0}$, since $1+\omega+\omega^2=0$. But for $x=0000$, $O_x$ is the identity. Thus
\[
V_{0000}\ket{\psi}=\ket{0}.
\] 
Similarly, for $x=1111$ we have $V_{1111}\ket{\psi}=-\ket{0}$. For $x=1000$, $O_x\ket{\psi}=\ket{\psi}-\ket{1}$. So $V_x\ket{\psi}=\ket{0}-O_x U\ket{1}$. But the coefficient of $\ket{1}$ in $U\ket{1}$ is zero, so $O_x$ leaves $U\ket{1}$ unchanged and we have
\[
V_{1000}\ket{\psi}=\ket{0}-U\ket{1}.
\] 
Similar results hold for the other weight 1 strings $x$.

For $x=1100$, $O_x\ket{\psi}=\ket{\psi}-\ket{1}-\ket{2}$, and 
\begin{align*}
U(\ket{\psi}-\ket{1}-\ket{2})&=\ket{0}-\frac{1}{2} \left( 2\ket{0}+\ket{1}+\ket{2}+(\omega+\omega^2)(\ket{3}+\ket{4}) \right)\\
&=\frac{1}{2}(-\ket{1}-\ket{2}+\ket{3}+\ket{4}).
\end{align*}
Applying $O_x$ once more we get
\[
V_{1100}\ket{\psi}=\ket{\psi}.
\]
We get the same result for other strings of weight 2, possibly with a
phase factor. For instance
\[
V_{1001}\ket{\psi}=\omega^2 \ket{\psi}.
\]
Given a string $x$ of weight 3, we can flip all the
bits and the oracle behaves identically, up to a change of sign on the space spanned by $\ket{1},\dots,\ket{4}$. For example,
\[
V_{0111}\ket{\psi}=-\ket{0}-U\ket{1},
\] 
and the other strings of weight 3 are similar.

Thus, for $x$ such that $|x| \neq 2$, $V_x\ket{\psi}$ lies in the span of
$\ket{0}$ and $U\ket{i}$ for $i=1,2,3,4$. However $\langle
\psi|0\rangle=0$ and $\langle
\psi|U|i\rangle=\frac{1}{4}(1+\omega+\omega^2)=0$ for $i=1,2,3,4$. So
this subspace is orthogonal to $\ket{\psi}$, proving our claim.

\subsection{Distinguishing weights 0 and 1 from balanced strings}
\label{sec:balanced}

We would ideally like to understand the number of queries required to solve the EXACT$_k$ function on $n$ input bits, for all $n$ and $k$. As an intermediate goal generalising our solution for $n=4$, one can ask for a two-query algorithm, for any even $n$, that distinguishes strings of weight 0 and 1 from strings of weight $n/2$. We take the input space $\mathcal{H}_{\text{in}}$, with basis vectors $\ket{i}$, $i=0,1, \ldots, n$, and tensor it with an ancilla space.
In the ancilla space we select vectors $\ket{a_i}$, $i=1,\ldots, n$, of
unit length with inner products $\ip{a_i}{a_j}=c$, for $i \ne j$, for some real $c$ which will be chosen later. We also select some orthogonal vector $\ket{0}$ in the ancilla space.

The oracle acts on the input space, so
$O_x\ket{i}\ket{j}=(-1)^{x_i}\ket{i}\ket{j}$ for any $i,j$.  Recall that
$e_i$ is the string with a 1 at position $i$ and 0's elsewhere, and let
$b \in \{0,1\}^n$ be an arbitrary ``balanced'' string with $|b|=n/2$. The
algorithm starts with the state $\ket{\phi}=\sum_{i=1}^n
\ket{i}\ket{0}$ (we keep $\ket{\phi}$ unnormalised throughout for simplicity).  As before, $V_x:=O_x U O_x$, with $U$ to be
defined shortly. The aim is to show that, for any balanced $b$, $V_b\ket{\phi}$ is orthogonal
to the subspace spanned by $V_{0^n}\ket{\phi},V_{e_1}\ket{\phi},\dots,V_{e_n}\ket{\phi}$, which
thus discriminates between balanced strings and strings of weight at most 1.

Now define $U$ by its action on the states $\ket{\tau_i}=O_{e_i}\ket{\phi}$ by
\begin{align*}
U\ket{\tau_i}=\alpha \ket{00}+\beta\sum_{j=1}^n \ket{j}\ket{a_{j-i+1}}+\gamma\ket{i}\ket{a_1},
\end{align*}
where $\alpha$, $\beta$ and $\gamma$ are real and $j-i$ is taken mod $n$, so $j-i+1 \in \{1,\dots,n\}$. This will be an isometry (which can be extended to a unitary on the whole tensor product space) if 
\begin{align*}
\alpha^2+(n-1)\beta^2+(\gamma+\beta)^2&=n,\\
\alpha^2+(n-2)\beta^2c+2(\gamma+\beta)\beta c&=n-4.
\end{align*}
We find
\begin{align}
\label{e1} V_{e_i}\ket{\phi}=O_{e_i}U\ket{\tau_i}=\alpha \ket{00}+\beta\sum_{j=1}^n \ket{j}\ket{a_{j-i+1}}-(\gamma+2\beta)\ket{i}\ket{a_1},
\end{align}
and
\begin{align}
\nonumber (n-2) U\ket{\phi}&=U \left(\sum_{i=1}^n \ket{\tau_i}\right)\\
\label{Uphi} &=n\alpha\ket{00}+\beta\left(\sum_{i=1}^n \ket{i} \right)\left( \sum_{i=1}^n \ket{a_i} \right)+ \gamma \left( \sum_{i=1}^n \ket{i} \right)  \ket{a_1},
\end{align}
and
\begin{align}
\nonumber 2V_b\ket{\phi}&=O_bU \sum_{i=1}^n (-1)^{b_i+1}\ket{\tau_i}\\
\label{Ub} &=\beta \sum_{j=1}^n \ket{j} \sum_{i=1}^n(-1)^{b_i+b_j+1}\ket{a_{j-i+1}}-\gamma \left( \sum_{i=1}^n\ket{i} \right) \ket{a_1}.
\end{align}
Note that the coefficient of $\ket{j}\ket{a_1}$ in the first term of the preceding equation is $\beta(-1)^{2b_i+1}=-\beta$. Using this fact, (\ref{Uphi}) and (\ref{Ub}) imply
\begin{align*}
(n-2)\bra{\phi}V_bU\ket{\phi}=\beta^2 \sum_{i=1}^n \sum_{j=1}^n (-1)^{b_i+b_j+1}-n\gamma\beta(2-(n-2)c)-n\gamma^2,
\end{align*}
and since the first term vanishes, the condition for $\bra{\phi}V_bU\ket{\phi}=0$ is
\begin{align}
\label{weight0} \gamma \beta (2+(n-2)c)+\gamma^2=0.
\end{align}
Similarly, from (\ref{e1}) and (\ref{Ub}) we deduce that $\bra{\phi}V_bV_{e_i}\ket{\phi}=0$ if and only if
\begin{align}
\label{weight1} 2\beta^2(1-c)+\gamma \beta (2-nc)+\gamma^2=0.
\end{align}
From (\ref{weight1}) and (\ref{weight0}) we find
\begin{align*}
(n-1)(n-2)c^2+(2n-3)c+1=0,
\end{align*}
which has roots $c=-1/(n-1), -1/(n-2)$. It is not possible to find a
set of unit length vectors $\ket{a_i}$ with
$\ip{a_i}{a_j}=-1/(n-2)$ for $i \ne j$; this follows from the fact
that the associated Gram matrix is not positive
semidefinite. However, this condition is not violated if
$\ip{a_i}{a_j}=-1/(n-1)$, and in fact we can choose the $a_i$ to
be the normalised vectors from the centre of a regular $(n-1)$-dimensional
simplex to its vertices, i.e. the vectors consisting of
\[
\sqrt{\frac{n}{n-1}}\left(-\frac{1}{n}, -\frac{1}{n}, \ldots , -\frac{1}{n},\frac{n-1}{n}\right)
\]
and its permutations. 

With $c=-1/(n-1)$ we find
\begin{align*}
\alpha^2&=\frac{n^3-6n^2+12n-12}{n(n-2)},\\
\beta&=\frac{2(n-1)}{n\sqrt{n-2}},\\
\gamma&=\frac{-2}{\sqrt{n-2}}.
\end{align*}

We have therefore shown that, with this choice of $U$, the algorithm correctly distinguishes between balanced strings and strings of Hamming weight at most 1. One can verify that the same argument goes through for strings of Hamming weight at least $n-1$. Thus the algorithm correctly distinguishes between inputs of  weight $n/2$ and weight in the set $\{0,1,n-1,n\}$.


\section{Exact quantum query algorithms for small functions}
\label{sec:smallfns}


Having whetted our appetite with the EXACT$_2$ problem, we now turn to quantum algorithms for other small boolean functions. For each function on $n$ bits we considered, we calculated, via the SDP of Definition~\ref{dfn:sdp}, the best possible success probability achievable by quantum algorithms making $t$ queries, for $t=1,\dots,n-1$ (any function can clearly be computed exactly using $n$ queries). We did this for all functions on up to 4 input bits, and for all symmetric functions on up to 6 bits. We used the convex optimisation package CVX~\cite{cvx} for Matlab, which allows optimisation problems to be specified in a simple and intuitive way; see Appendix~\ref{sec:cvxcode} for source code. The CVX package allows the choice of underlying solvers SeDuMi and SDPT3. We used SeDuMi for the results given below, and also checked the results with SDPT3, which gave the same values up to a difference of at most $0.001$. The numerical results for functions on 4 bits, and symmetric functions on up to 6 bits, are deferred to Appendix~\ref{sec:numerical}.

Note that there is a basic issue with calculating {\em exact} quantum query complexity numerically, which is that one receives a numerical solution from the SDP solver, which is not exact. If the SDP solver claims that there exists a quantum query algorithm that computes some function $f$ using $k$ queries with success probability at least $0.999$, for example, one cannot be sure that this algorithm is actually exact. In the case of all functions on up to 3 bits, we therefore give explicit optimal {\em exact} quantum query algorithms. These algorithms were obtained by a somewhat laborious process of taking the numerically obtained (real-valued, approximate) solutions to the SDP and using these as a guide to find exact solutions.

For completeness, we begin by giving optimal exact quantum query algorithms for all boolean functions of 1 and 2 bits. In what follows, the tables are indexed by function ID; the ID of each function is the integer obtained by converting its truth table from binary. Columns give the optimal success probability that can be achieved by quantum algorithms making $1,\dots,n-1$ queries. Entries are starred when there is a {\em nonadaptive} exact quantum algorithm using that number of queries (see Section \ref{sec:nonadapt}).


\subsection{Functions of up to 2 bits}

Up to isomorphism, the only non-constant function of 1 bit is $f(x_1) = x_1$, which clearly requires exactly one query. In the case of 2 bits, there are two classes of functions.

\begin{center}
\begin{tabular}{|c|c|c|c|c|}
\hline {\bf ID} & {\bf Function} & {\bf 1 query}\\
\hline 1 & $x_1 \wedge x_2$ & 0.900\\
\hline 6 & $x_1 \oplus x_2$ & 1*\\
\hline
\end{tabular}
\end{center}

An optimal quantum algorithm for the function $x_1 \oplus x_2$ proceeds as follows~\cite{cleve98a}. Input the state $\frac{1}{\sqrt{2}}(\ket{1}+\ket{2})$ into the oracle to produce $\frac{1}{\sqrt{2}}((-1)^{x_1}\ket{1} + (-1)^{x_2}\ket{x_2})$. Perform a Hadamard gate (with respect to the basis $\{\ket{1},\ket{2}\}$), measure in the basis $\{\ket{1},\ket{2}\}$, and output 0 if ``1'' is measured, and 1 if ``2'' is measured. It is easy to see that this algorithm succeeds with certainty.


\subsection{Functions of 3 bits}

The following table lists the optimal success probability that can be achieved by quantum algorithms computing all boolean functions depending on 3 bits, up to isomorphism.
\begin{center}
\begin{tabular}{|c|c|c|c|c|c|c|}
\hline {\bf ID} & {\bf Function} & {\bf 1 query} & {\bf 2 queries}& {\bf $\F_2$ deg.} & {\bf D(f)} \\
\hline 1 & $x_1 \wedge x_2 \wedge x_3$ & 0.800 & 0.980 & 3 & 3\\
6 & $x_1 \wedge (x_2 \oplus x_3)$ & 0.667 & 1* & 2 & 3\\
7 & $x_1 \wedge (x_2 \vee x_3)$ & 0.773 & 1 & 3 & 3\\
22 & EXACT$_2$ & 0.571 & 1 & 3 & 3\\
23 & MAJ & 0.667 & 1 & 2 & 3\\
30 & $x_1 \oplus (x_2 \vee x_3)$ & 0.667 & 1 & 2 & 3\\
53 & SEL$(x_1,x_2,x_3)$ & 0.854 & 1 & 2 & 2\\
67 & $(x_1 \wedge x_2) \vee (\bar{x_1} \wedge \bar{x_2} \wedge x_3)$ & 0.773 & 1 & 3 & 3\\
105 & PARITY & 0.500 & 1* & 1 & 3\\
126 & NAE & 0.900 & 1* & 2 & 3\\
\hline
\end{tabular}
\end{center}

Observe that the AND function requires 3 queries to be computed exactly; in fact, it has been known for some time that AND on $n$ bits has $Q_E(\text{AND})=n$~\cite{beals01}. For most of the other functions, an optimal exact quantum algorithm is easy to determine, based only on classical queries and computing the parity of two bits using one query:

\begin{itemize}
\item $x_1 \wedge (x_2 \oplus x_3)$: Query $x_1$ and evaluate $x_2 \oplus x_3$ using one query.
\item MAJ: First evaluate $x_1 \oplus x_2$. If the answer is 1, then output $x_3$, otherwise output $x_1$. This works because if $x_1$ and $x_2$ are different, then there will be at least two 1's in total if and only if $x_3$ is 1. If $x_1$ and $x_2$ are the same, there will be at least two 1's if and only if $x_1$ is 1. 
\item $x_1 \oplus (x_2 \vee x_3)$: This function is equivalent to $(\bar{x_3} \wedge x_2) \vee (x_3 \wedge (x_1 \oplus x_2))$. So query $x_3$ first, then either query $x_2$ or $x_1 \oplus x_2$.
\item SEL$(x_1,x_2,x_3)$: Query $x_1$ first, then either output $x_2$ or $x_3$.
\item PARITY: Evaluate $x_1 \oplus x_2$, query $x_3$, take the exclusive OR of the two.
\item NAE: This function is equivalent to $(x_1 \oplus x_2) \vee (x_1 \oplus x_3)$.
\end{itemize}

However, the three remaining functions (EXACT$_2$, $x_1 \wedge (x_2 \vee x_3)$ and $(x_1 \wedge x_2) \vee (\bar{x_1} \wedge \bar{x_2} \wedge x_3)$) do not have such straightforward optimal algorithms. Indeed, by Proposition \ref{prop:f2degree}, they cannot be computed using 2 queries by any algorithm which is a decision tree on parity queries. In the case of EXACT$_2$, we obtain an optimal algorithm by appending an additional zero bit and computing EXACT$_2$ on 4 bits (see Section \ref{sec:exact2}). We now give quantum query algorithms for the two remaining functions. Rather than writing out the unitary operators arising in the algorithm explicitly, we simply give expressions for matrices forming a exact solution to the query complexity SDP. We stress that, given these matrices, one can follow the procedure of Section \ref{sec:sdp} to find an explicit quantum algorithm completely mechanically. The matrices are fully specified by their non-zero eigenvalues and eigenvectors.


\subsubsection{$x_1 \wedge (x_2 \vee x_3)$}

\[
\begin{array}{|c|c|c|}
\hline
\text{\bf Matrix} & \text{\bf Eigenvalues} & \text{\bf Eigenvectors} \\
\hline M_0^{(0)}, M_1^{(0)}, & 2 & (1,1,1,1,1,1,1,1) \\
 M_2^{(0)}, M_3^{(0)} & & \\
\hline M_0^{(1)} & 3/2 & (1,1,1,1,0,0,0,0) \\
& 1 & \{(-1,1,-1,1,0,2,0,2),(0,-1,1,0,0,-1,1,0)\} \\
\hline M_1^{(1)} & 1 & (1,0,0,-1,2,1,1,0) \\
& 1/2 & (-1,0,-1,0,0,1,0,1) \\
\hline M_2^{(1)} & 1 & (1,0,0,-1,2,1,1,0) \\
& 1/2 & (-1,-1,0,0,0,0,1,1) \\
\hline M_3^{(1)} & 3/4 & \{(0,1,0,1,0,1,0,1),(0,-1,1,0,0,-1,1,0)\} \\
\hline \Gamma_0 & 5/2 & (3,2,2,3,2,0,0,0)\\
& 1 & \{(0,-1,0,0,1,0,0,0),(0,-1,1,0,0,0,0,0)\}\\
& 1/2 & (-1,0,0,1,0,0,0,0)\\
\hline \Gamma_1 & 3/2 & \{(0,0,0,0,0,1,0,1),(0,0,0,0,0,-1,1,0)\}\\
 \hline
\end{array}
\]


\subsubsection{$(x_1 \wedge x_2) \vee (\bar{x_1} \wedge \bar{x_2} \wedge x_3)$}
For conciseness, set $\alpha^{\pm} = -1+\frac{1}{2}(5\pm\sqrt{5})$ in the following table.
\[
\begin{array}{|c|c|c|}
\hline
\text{\bf Matrix} & \text{\bf Eigenvalues} & \text{\bf Eigenvectors} \\
\hline M_0^{(0)}, M_1^{(0)}, & 2 & (1,1,1,1,1,1,1,1)\\
M_2^{(0)}, M_3^{(0)} & & \\
\hline M_0^{(1)} & 1 & (-2,-1,-1,0,-1,0,0,1) \\
& 3/4 & (0,0,-1,-1,2,2,1,1) \\
& 1/4 & (0,0,1,1,0,0,1,1)\\
\hline M_1^{(1)} & 1 & (-2,-1,-1,0,-1,0,0,1) \\
& 3/4 & (0,-1,2,1,-1,-2,1,0) \\
& 1/4 & (0,-1,0,-1,1,0,1,0)\\
\hline M_2^{(1)} & 1 & (0,1,1,2,1,2,2,3) \\
& 3/4 & (0,-1,0,-1,1,0,1,0) \\
& 1/4 & (0,-1,2,1,-1,-2,1,0)\\
\hline M_3^{(1)} & 1 & (0,3,-1,2,-1,2,-2,1) \\
& 3/4 & (0,0,1,1,0,0,1,1) \\
& 1/4 & (0,0,-1,-1,2,2,1,1) \\
\hline \Gamma_0 & \frac{1}{4}(5+\sqrt{5}) & (1+\sqrt{5},0,\alpha^+,1,\alpha^+,1,0,0)\\
& 3/2 & (0,0,0,-1,0,1,0,0)\\
& 1 & (0,0,-1,0,1,0,0,0)\\
& \frac{1}{4}(5-\sqrt{5}) & (1-\sqrt{5},0,\alpha^-,1,\alpha^-,1,0,0)\\
\hline \Gamma_1 & 3/2 & \{(0,-1,0,0,0,0,0,1),(0,1,0,0,0,0,1,0)\}\\
 \hline
\end{array}
\]


\section{Nonadaptive exact quantum query complexity}
\label{sec:nonadapt}

We now turn to a very restricted model of exact query complexity, in which the algorithm's queries are required to be nonadaptive. In other words, the choice of which input variables to query cannot depend on the result of previous queries, so the algorithm must choose which variables to query at the start\footnote{In some sense, all quantum query algorithms are nonadaptive, as the choice of unitaries applied in the algorithm does not depend on the input. However, the weight placed on queries to different input bits throughout the algorithm does in general depend on the input.}. Any nonadaptive quantum algorithm computing a function $f(x)$ using $k$ queries to the input $x$ corresponds to a choice of a state $\ket{\psi}$, which is input to $k$ copies of the oracle (i.e.\ the unitary operator $O_x^{\otimes k}$), followed by a two-outcome measurement to determine whether $f(x)=0$ or $f(x)=1$.

One reason to study the nonadaptive model is that some important quantum algorithms are nonadaptive. An example is Simon's algorithm for the hidden subgroup problem over $\Z_2^n$~\cite{simon97}, which computes a {\em partial} function with bounded error and achieves an exponential speed-up over any possible classical algorithm. Separations are also known for total functions. For example, an optimal exact quantum algorithm for PARITY is nonadaptive and uses $\lceil n/2 \rceil$ quantum queries~\cite{cleve98a,farhi98a}. In the bounded-error setting, a more general result is known: by a remarkable result of van Dam, any boolean function of $n$ bits can be computed with bounded error using $n/2 + O(\sqrt{n})$ quantum queries~\cite{vandam98}. However, it has also been shown that this separation is close to optimal~\cite{montanaro10b}; any nonadaptive bounded-error quantum algorithm computing a total function depending on $n$ variables must make $\Omega(n)$ queries. Nonadaptive exact quantum query algorithms are even more restricted, and require at least $n/2$ queries~\cite{montanaro10b}.

Another motivation for studying the nonadaptive model is that, as we have seen, the general model of exact quantum query complexity appears to be rich and complex. Working in the much simpler nonadaptive model might allow stronger bounds and tighter characterisations to be proven.

For a boolean function $f$, let $D^{na}(f)$, $Q_E^{na}(f)$ be the nonadaptive quantum and classical exact query complexities of $f$, i.e.\ the minimum number of nonadaptive queries required to compute $f$ with certainty. The nonadaptive model is extremely restricted classically, as we see from the following easy proposition.

\begin{prop}
For any total boolean function $f$ depending on $n$ variables, $D^{na}(f)=n$.
\end{prop}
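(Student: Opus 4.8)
The plan is to prove the contrapositive: if a nonadaptive classical algorithm makes strictly fewer than $n$ queries, then it cannot compute $f$ correctly on all inputs, given that $f$ depends on all $n$ variables. A nonadaptive classical deterministic algorithm querying at most $n-1$ bits corresponds to a fixed subset $S \subseteq \{1,\dots,n\}$ of queried positions together with a decision rule mapping the queried values to an output. Since the algorithm is nonadaptive, $S$ is the same for every input, so I would fix such an $S$ with $|S| \le n-1$ and argue that the output cannot depend on $f$ correctly.

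The key step is to exploit the definition of \emph{depending on} a variable. Since $f$ depends on variable $x_i$ for every $i$, in particular it depends on some index $i \notin S$ (as $|S| < n$, such an $i$ exists). By definition of dependence, there is an input $y \in \{0,1\}^n$ and its neighbour $y \oplus e_i$ (flipping only coordinate $i$) such that $f(y) \ne f(y \oplus e_i)$. Because $i \notin S$, the two inputs $y$ and $y \oplus e_i$ agree on every queried position in $S$. Hence the nonadaptive algorithm sees exactly the same queried values on both inputs and must return the same output on both. But the correct outputs differ, so the algorithm errs on at least one of them, contradicting exactness. This shows $D^{na}(f) \ge n$, and since querying all $n$ bits trivially computes $f$, we get $D^{na}(f) = n$.

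The only subtlety, which is really a matter of bookkeeping rather than a genuine obstacle, is handling the adaptive-looking freedom in the classical model: one must be careful that ``nonadaptive'' genuinely forces a single fixed query set $S$ independent of the answers, which is exactly the model definition, so no adaptive branching is allowed. Given that, the indistinguishability argument on the pair $\{y, y \oplus e_i\}$ closes the proof immediately. I expect no real difficulty; the statement is deliberately an \emph{easy} proposition highlighting how weak the nonadaptive model is classically, in sharp contrast to the quantum case developed later in the section.
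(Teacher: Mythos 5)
Your proof is correct and follows essentially the same route as the paper's: fix the (input-independent) query set $S$ with $|S|<n$, pick an unqueried variable $i$ on which $f$ depends, and note that the algorithm cannot distinguish the pair $y$, $y\oplus e_i$ with $f(y)\neq f(y\oplus e_i)$. The paper's version is just a more compressed statement of the same indistinguishability argument.
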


\begin{proof}
A nonadaptive exact classical query algorithm $\mathcal{A}$ making $k$ queries is specified by a list of $k$ fixed variables which are queried. If $k<n$, there must exist a variable $i$ which is not queried, but on which $f$ depends. Thus there must exist an input $x$ such that if bit $i$ is flipped, $\mathcal{A}$ does not notice the difference, so $\mathcal{A}$ cannot be correct on every input.
\end{proof}

We now introduce some additional notation. For any boolean function $f:\{0,1\}^n \rightarrow \{0,1\}$, define
\[ S_f := \{ z : \forall x, f(x) = f(x+z) \}, \]
where addition is over the group $\Z_2^n$; i.e.\ $S_f$ is the set of translations of the input under which $f$ is invariant. Note that $S_f$ is a subspace of $\{0,1\}^n$. For any subspace $S \subseteq \{0,1\}^n$, let $S^\perp$ denote the orthogonal subspace to $S$, i.e.\ $S^\perp = \{x:x \cdot s = 0, \forall\,s \in S\}$. Finally, let $d(x,S)$ denote the {\em maximum} Hamming distance between a bit-string $x \in \{0,1\}^n$ and a subset $S \subseteq \{0,1\}^n$: $d(x,S) = \max_{y \in S} d(x,y)$. Then we have the following theorem.

\begin{thm}
\label{thm:na}
For any boolean function $f:\{0,1\}^n \rightarrow \{0,1\}$,
\[ Q_E^{na}(f) = \min_{x \in \{0,1\}^n} \max_{y \in S_f^\perp} d(x,y) = \min_{x \in \{0,1\}^n} d(x,S_f^{\perp}). \]
\end{thm}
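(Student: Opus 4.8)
The plan is to translate the problem into Fourier analysis over $\Z_2^n$. A nonadaptive $k$-query algorithm is determined by a fixed input state $\ket{\psi}$, which is sent through $O_x^{\otimes k}$ and then subjected to a two-outcome projective measurement. I would first note that such an algorithm computes $f$ exactly if and only if the output states $O_x^{\otimes k}\ket{\psi}$ and $O_{x'}^{\otimes k}\ket{\psi}$ are orthogonal whenever $f(x) \neq f(x')$: orthogonality of the two spans is both necessary and sufficient for a perfect two-outcome measurement. Since each $O_x$ is a real diagonal involution, $(O_x^{\otimes k})^\dag O_{x'}^{\otimes k} = O_{x+x'}^{\otimes k}$, so this reduces to the single condition $\bra{\psi} O_z^{\otimes k} \ket{\psi} = 0$ for every $z$ of the form $x + x'$ with $f(x) \neq f(x')$. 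The key observation is that this set of $z$ is exactly the complement of $S_f$: by definition $z \notin S_f$ iff there is some $x$ with $f(x) \neq f(x+z)$.

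Next I would diagonalise $O_z^{\otimes k}$. On a computational basis state $\ket{i_1}\cdots\ket{i_k}$ of the $k$-fold input register the operator acts by the phase $(-1)^{z \cdot S}$, where $S = \sum_j e_{i_j} \in \{0,1\}^n$ (reduced mod $2$, with index $0$ contributing the zero vector). Every $S$ with $|S| \le k$ arises in this way and no others do, so letting $P_S$ project onto the span of basis states with a given $S$ gives $O_z^{\otimes k} = \sum_{|S| \le k} (-1)^{z\cdot S} P_S$. Writing $p_S = \|P_S\ket{\psi}\|^2$, the numbers $p_S$ form a probability distribution supported on $\{S : |S| \le k\}$, and the exactness condition becomes $\sum_S (-1)^{z \cdot S} p_S = 0$ for all $z \notin S_f$. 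In other words, the Fourier transform of $p$ (as a function on $\Z_2^n$ in the variable $S$) must be supported on the subspace $S_f$.

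The crux is then a standard Fourier-duality fact: a function on $\Z_2^n$ has Fourier support contained in $S_f$ if and only if it is invariant under translation by $S_f^\perp$, i.e. $p$ is constant on cosets of $S_f^\perp$. Since $p \ge 0$ and $\sum_S p_S = 1$, a valid $p$ supported on strings of weight at most $k$ exists if and only if some coset of $S_f^\perp$ lies entirely within $\{S : |S| \le k\}$: the uniform distribution on such a coset works, and conversely any coset carrying positive mass must be contained in the support. I would check realizability by noting that each $P_S$ with $|S| \le k$ has nonzero rank, so any such $p$ is realised by a concrete $\ket{\psi}$. Finally, the existence of a coset $x + S_f^\perp$ inside $\{|S| \le k\}$ is precisely the statement $\min_x \max_{w \in S_f^\perp} |x + w| \le k$, and since $|x+w|$ is the Hamming distance $d(x,w)$ this maximum is $d(x, S_f^\perp)$. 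Taking the least feasible $k$ yields $Q_E^{na}(f) = \min_x d(x, S_f^\perp)$, with the second displayed equality being just the definition of $d(x,S_f^\perp)$.

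I expect the main obstacle to be making the reduction to the condition on $p$ fully rigorous in both directions, and carefully invoking the Fourier-duality step so that the coset structure of $S_f^\perp$ is correctly matched to the weight constraint $|S| \le k$; the quantum-to-combinatorial translation of the first two paragraphs is the conceptual heart, while the remaining manipulations are routine.
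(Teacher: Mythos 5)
Your proposal is correct and follows essentially the same route as the paper's proof: reduce exactness to the condition that the weight function on strings of Hamming weight at most $k$ has Fourier transform supported on $S_f$, invoke Fourier duality to get constancy on cosets of $S_f^\perp$, and conclude with the coset-containment (radius) characterisation. The only difference is cosmetic: where the paper first argues one may restrict to sorted, multiplicity-free query strings, you aggregate the squared amplitudes over each eigenspace of $O_z^{\otimes k}$ via the projectors $P_S$, which formalises the same reduction slightly more cleanly.
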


We have thus completely characterised the nonadaptive exact quantum query complexity of $f$. In the coding theory literature, the quantity $\min_{x \in \{0,1\}^n} d(x,S_f^{\perp})$ is known as the {\em radius} of the code $S_f^{\perp}$~\cite{cohen85}. Observe that Theorem \ref{thm:na} implies that $Q_E^{na}(f)$ can be computed exactly in time polynomial in $2^n$. We now prove this theorem and then draw some corollaries. In the proof it will be convenient to use the notation
\[ \hat{f}(x) = \frac{1}{2^n} \sum_{y \in \{0,1\}^n} (-1)^{x \cdot y} f(y) \]
for the Fourier transform (over $\Z_2^n$) of some function $f:\{0,1\}^n \rightarrow \R$.

\begin{proof}[Proof of Theorem \ref{thm:na}]
Label computational basis states by a length $k$ string of integers $(i_1,\dots,i_k)$ in the range $\{0,\dots,n\}$; each such string represents a list of variables queried, with 0 representing a ``null query'' which does nothing. $O_x^{\otimes k}$ acts on these basis states by mapping
\[ \ket{i_1,\dots,i_k} \mapsto (-1)^{x_{i_1} + \dots + x_{i_k}} \ket{i_1,\dots,i_k} . \]
Now note that we can restrict ourselves to query strings in non-decreasing order, and containing at most one of each integer between 1 and $n$. The first of these is because querying any permutation of a string is equivalent to querying the string itself. The second is because querying the same index twice does nothing, and hence is equivalent to the null query.

These strings are now in obvious one-to-one correspondence with the set of $n$-bit strings of Hamming weight at most $k$. Thus the state we obtain from applying $O_x^{\otimes k}$ to an arbitrary input state is of the form
\[ \ket{\psi_x} = \sum_{s\in\{0,1\}^n,|s|\le k} (-1)^{s \cdot x} \alpha_s \ket{s}. \]
A nonadaptive quantum query algorithm computing $f$ exactly using $k$ queries exists if and only if there exists a set $\{\alpha_s\}$ such that $\ip{\psi_x}{\psi_y}=0$ for all $x$, $y$ such that $f(x) \neq f(y)$, or in other words
\[ \sum_{s\in\{0,1\}^n,|s|\le k} (-1)^{s \cdot (x+y)} |\alpha_s|^2 = 0. \]
For brevity, write $w(s) = |\alpha_s|^2$. The above constraint says that, for all $z \notin S_f$,
\[ \sum_{s\in\{0,1\}^n,|s|\le k} (-1)^{s \cdot z} w(s) = 0. \]
Considering $w(s)$ as a function $w:\{0,1\}^n \rightarrow \R$ such that $w(s)=0$ for $|s|>k$, in Fourier-analytic terminology the constraint says that
\[ \widehat{w}(z) = 0 \mbox{ if } z \notin S_f, \]
i.e.\ that $ \widehat{w}$ is only supported on the subspace $S_f$. This is equivalent to the constraint that
\[ w(s) = \frac{1}{|S_f^\perp|} \sum_{t \in S_f^\perp} w(s+t) \mbox{ for all } s. \]
To see this, define the function $P_{S_f}(x) = [x \in S_f]$, and note that $\widehat{P_{S_f}}(t) = \frac{1}{|S_f^\perp|} [t \in S_{f}^\perp]$. Letting $\ast$ denote convolution over $\Z_2^n$ (i.e.\ $(f \ast g)(x) = \sum_y f(y) g(x+y)$), by Fourier duality we have
\beas
P_{S_f} w = w &\Leftrightarrow& \widehat{P_{S_f}} \ast \widehat{w} = \widehat{w} \Leftrightarrow \sum_t w(s+t) \widehat{P_{S_f}}(t) = w(s) \mbox{ for all } s\\
&\Leftrightarrow& \frac{1}{|S_f^\perp|} \sum_{t \in S_f^\perp} w(s+t) = w(s) \mbox{ for all } s.
\eeas
Thus $w$ is uniform on cosets of $S_f^\perp$. As $w$ is not identically zero, there must be a coset $t+S_f^\perp$ such that every element $s \in t + S_f^\perp$ has Hamming weight at most $k$. If $S_f = 0$, then $S_f^\perp = \{0,1\}^n$ and hence has only one coset, which contains $1^n$. Hence we must have $k=n$. More generally, we have that $Q_E^{na}(f)$ is the minimal $k$ such that there exists a $t$ satisfying $|s| \le k$ for all $s \in t + S_f^\perp$. In other words, $Q_E^{na}(f)$ is the minimal $k$ such that there exists a $t$ satisfying $d(s,t) \le k$ for all $s \in S_f^\perp$.
\end{proof}

We observe from this proof that it is without loss of generality that any nonadaptive exact quantum algorithm can be described as picking a coset of $S_f^\perp$ and querying everything in that subset uniformly. Explicitly, we have the following algorithm.

\begin{enumerate}
\item Let $t\in\{0,1\}^n$ be a bit-string such that $d(t,S_f^\perp) = k$.
\item Produce the state of $n$ qubits $\frac{1}{|S_f^\perp|^{1/2}} \sum_{s \in t + S_f^\perp} (-1)^{s \cdot x} \ket{s}$ at a cost of $k$ queries to the oracle.
\item Perform Hadamards on every qubit of the resulting state and measure to get outcome $\tilde{x}$.
\item Output $f(\tilde{x})$.
\end{enumerate}

One can easily verify that in fact $f(\tilde{x})=f(x)$ with certainty. We note that this is reminiscent of an algorithm of van Dam~\cite{vandam98} which learns $x$ itself with bounded error using $n/2 + O(\sqrt{n})$ queries to the oracle. Here, we also compute a partial Fourier transform from which $f(x)$ can be determined, but our algorithm succeeds with certainty.

We now draw some corollaries from Theorem \ref{thm:na}.

\begin{cor}
\label{cor:nan}
For any boolean function $f:\{0,1\}^n \rightarrow \{0,1\}$ such that $f$ is not invariant under any translation, $Q_E^{na}(f) = n$.
\end{cor}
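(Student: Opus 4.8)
The plan is to read this corollary off directly from the characterisation in Theorem \ref{thm:na}, since the hypothesis forces $S_f^\perp$ to be as large as possible. First I would unpack the hypothesis: saying that $f$ is invariant under no translation means there is no nonzero $z$ with $f(x)=f(x+z)$ for all $x$. As $z=0$ trivially satisfies this condition, the hypothesis is precisely the statement that $S_f = \{0\}$, the trivial subspace.

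Next I would substitute into Theorem \ref{thm:na}. Since $S_f = \{0\}$, its orthogonal complement is the whole space, $S_f^\perp = \{0,1\}^n$. The theorem then gives
\[ Q_E^{na}(f) = \min_{x \in \{0,1\}^n} d(x, S_f^\perp) = \min_{x \in \{0,1\}^n} \max_{y \in \{0,1\}^n} d(x,y). \]
The only remaining step is to evaluate this min--max of Hamming distances, and here the key point is that $d$ denotes the \emph{maximum} distance from $x$ to the subset. For any fixed $x$, the point of $\{0,1\}^n$ farthest from $x$ is its bitwise complement $\bar{x}$, at distance exactly $n$; hence $\max_{y} d(x,y) = n$ for every $x$, and so the outer minimisation also returns $n$. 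This yields $Q_E^{na}(f) = n$.

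I expect no real obstacle here: the content is entirely carried by Theorem \ref{thm:na}, and indeed this exact case is already remarked upon inside its proof (the observation that $S_f = 0$ forces $k=n$). The one thing to be careful about is not to conflate the maximum-distance convention for $d(x,S)$ with the more familiar minimum-distance (covering) notion; it is precisely because we take the \emph{farthest} point of $S_f^\perp$ that every centre $x$ is equally bad, and the complement $\bar{x}$ forces the value up to $n$ no matter how $x$ is chosen.
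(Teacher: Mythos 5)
Your proof is correct and follows essentially the same route as the paper's: identify $S_f$ as trivial so that $S_f^\perp = \{0,1\}^n$, then observe that the bitwise complement $\bar{x}$ forces $\max_y d(x,y) = n$ for every centre $x$. (Your phrasing $S_f = \{0\}$ is in fact slightly more careful than the paper's ``$S_f = \emptyset$'', since $0^n$ always lies in $S_f$.)
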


\begin{proof}
If $f$ is not invariant under any translation, $S_f = \emptyset$ and hence $S_f^\perp = \{0,1\}^n$. For any bit-string $x \in \{0,1\}^n$, there exists a $y \in \{0,1\}^n$ such that $d(x,y)=n$. Hence $Q_E^{na}(f) = n$.
\end{proof}

One could also have observed this corollary by noting that $Q_E^{na}(f)$ depends only on $S_f$, and for the AND function (which has $Q_E(\text{AND})=n$~\cite{beals01}), $S_{\text{AND}} = \emptyset$. The corollary implies that only an exponentially small fraction of boolean functions $f$ have $Q_E^{na}(f) < n$. One class of functions that do satisfy this is given by the following corollary.

\begin{cor}
\label{cor:even}
For any boolean function $f:\{0,1\}^n \rightarrow \{0,1\}$ such that $f(x) = f(\bar{x})$ for all $x$, $Q_E^{na}(f) \le n-1$.
\end{cor}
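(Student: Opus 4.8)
The plan is to invoke Theorem \ref{thm:na} and reduce the problem to a short combinatorial argument about Hamming weights in the dual code $S_f^\perp$. First I would translate the hypothesis $f(x)=f(\bar{x})$ into a statement about $S_f$: since $\bar{x} = x + 1^n$, the assumption says precisely that $f$ is invariant under translation by the all-ones string, so $1^n \in S_f$. The immediate and crucial consequence is that every $y \in S_f^\perp$ satisfies $y \cdot 1^n = 0$, i.e.\ $\sum_i y_i \equiv 0 \pmod 2$. Thus $S_f^\perp$ consists entirely of \emph{even}-weight strings. This single observation is the engine of the whole proof.

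By Theorem \ref{thm:na}, it suffices to exhibit a single $x \in \{0,1\}^n$ with $d(x,S_f^\perp) = \max_{y \in S_f^\perp} d(x,y) \le n-1$. Since $d(x,y) = |x+y|$ and the only way to achieve $|x+y| = n$ is to have $x+y = 1^n$, i.e.\ $y = \bar{x}$, the distance $d(x,y)$ can equal $n$ for some $y \in S_f^\perp$ only if $\bar{x} \in S_f^\perp$. Hence the goal reduces to choosing $x$ so that $\bar{x} \notin S_f^\perp$.

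Here the even-weight property does the work: because $S_f^\perp$ contains only even-weight strings, any string of \emph{odd} weight automatically lies outside $S_f^\perp$. So I would choose $x$ so that $\bar{x}$ has odd Hamming weight. Since $|\bar{x}| = n - |x|$, this amounts to picking $x$ with $|x| \not\equiv n \pmod 2$, which is always possible (for instance, $x = 0^n$ when $n$ is odd, and $x = e_1$ when $n$ is even). For such an $x$ we have $\bar{x} \notin S_f^\perp$, so $|x+y| \le n-1$ for every $y \in S_f^\perp$, giving $d(x,S_f^\perp) \le n-1$ and therefore $Q_E^{na}(f) \le n-1$ by Theorem \ref{thm:na}.

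There is no real obstacle here; the only point requiring a moment's care is the parity bookkeeping in the final step, namely verifying that a string $x$ with $\bar{x}$ of odd weight always exists regardless of the parity of $n$. Everything else follows mechanically once one recognises that the hypothesis places $1^n$ into $S_f$ and hence forces $S_f^\perp$ to avoid odd weights.
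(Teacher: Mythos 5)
Your proposal is correct and follows essentially the same route as the paper's proof: both derive $1^n \in S_f$, conclude that $S_f^\perp$ contains only even-weight strings, and then choose $x = 0^n$ (for $n$ odd) or $x = e_1$ (for $n$ even) so that no element of $S_f^\perp$ can be at distance $n$ from $x$. The only cosmetic difference is that you phrase the distance-$n$ condition explicitly as $\bar{x} \in S_f^\perp$, which makes the parity bookkeeping slightly more transparent.
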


\begin{proof}
The constraint $f(x) = f(\bar{x})$ for all $x$ is equivalent to $f(x+1^n) = f(x)$ for all $x$, so $\{0^n,1^n\} \subseteq S_f$, implying $S_f^\perp \subseteq \{ x: |x| \text{ even} \}$. If $n$ is odd, then $d(0^n,S_f^\perp) \le n-1$ (as $1^n$ has odd Hamming weight, there is no even weight bit string distance $n$ from $0^n$). Similarly, if $n$ is even, $d(10^{n-1},S_f^\perp) \le n-1$. Hence $Q_E^{na}(f) \le n-1$.
\end{proof}
An explicit nonadaptive quantum algorithm achieving this query complexity proceeds as follows. Evaluate $y_k := x_1 \oplus x_k$, for $2 \le k \le n$, at a cost of $n-1$ queries in total, then output $f(0,y_2,\dots,y_n)$. If $x_1=0$, this is simply $f(x)$, while if $x_1=1$, this is $f(\bar{x})=f(x)$.

\begin{cor}
\label{cor:nan2}
For any boolean function $f:\{0,1\}^n \rightarrow \{0,1\}$ that depends on all $n$ input bits, $Q_E^{na}(f) \ge \lceil n/2 \rceil$.
\end{cor}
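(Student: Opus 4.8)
The plan is to read off the answer from Theorem~\ref{thm:na}, which identifies $Q_E^{na}(f) = \min_{x} \max_{y \in S_f^\perp} d(x,y)$ with the radius of the code $C := S_f^\perp$. So it suffices to show that this radius is at least $\lceil n/2 \rceil$ whenever $f$ depends on all $n$ bits. The first step is to translate the dependence hypothesis into a structural property of $C$. Since $f$ depends on bit $i$, flipping coordinate $i$ changes the output on some input, which is exactly the statement $e_i \notin S_f$. Using $(S_f^\perp)^\perp = S_f$ over $\mathbb{F}_2$, one has $e_i \in S_f$ if and only if $y_i = 0$ for every $y \in C$, so $e_i \notin S_f$ is equivalent to the existence of some $y \in C$ with $y_i = 1$. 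Thus the hypothesis that $f$ depends on all $n$ bits is precisely the statement that $C = S_f^\perp$ has \emph{full support}: no coordinate is identically zero across the code.

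The core of the argument is then an averaging lower bound on the radius of a full-support linear code. I would fix any $x \in \{0,1\}^n$ and consider a uniformly random codeword $y \in C$. For each coordinate $i$, the map $y \mapsto y_i$ is a linear functional on $C$ which, by full support, is not identically zero; hence its kernel is a subspace of index $2$, and exactly half of the codewords have $y_i = 1$. Consequently $\Pr[y_i \neq x_i] = 1/2$ for every $i$, regardless of the value of $x_i$, and by linearity of expectation $\mathbb{E}_{y \in C}[d(x,y)] = \sum_{i=1}^n \Pr[y_i \neq x_i] = n/2$.

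To finish, since the maximum over $C$ is at least the average, $\max_{y \in C} d(x,y) \ge n/2$; and as Hamming distances are integers this forces $\max_{y \in C} d(x,y) \ge \lceil n/2 \rceil$. Because $x$ was arbitrary, taking the minimum over $x$ yields $Q_E^{na}(f) = \min_x \max_{y \in C} d(x,y) \ge \lceil n/2 \rceil$, as required.

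I do not anticipate a serious obstacle. The one point requiring a little care is the equivalence between ``$f$ depends on bit $i$'' and ``$C$ has a codeword nonzero in coordinate $i$'', which rests on the double-orthogonal-complement identity for $\mathbb{F}_2$-subspaces; once full support is in hand, the expectation computation is routine, and the integrality of Hamming distance is exactly what upgrades the bound $n/2$ to $\lceil n/2 \rceil$ in the odd case.
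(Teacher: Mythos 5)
Your proposal is correct and follows essentially the same route as the paper: both arguments translate ``$f$ depends on bit $i$'' into ``some $y \in S_f^\perp$ has $y_i = 1$'' and then average $d(x,y)$ over a uniformly random $y \in S_f^\perp$ to get expectation $n/2$, whence the maximum is at least $n/2$. You merely spell out two points the paper leaves implicit, namely the index-two kernel argument showing each coordinate is balanced on the code and the integrality step that upgrades $n/2$ to $\lceil n/2 \rceil$.
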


\begin{proof}
We need to show that, for any bit-string $x$, we can find an element $y \in S_f^\perp$ such that $d(x,y) \ge n/2$. As $f$ depends on all its input bits, for all $i \in \{1,\dots,n\}$, $e_i \notin S_f$. This implies that, for all $i \in \{1,\dots,n\}$, there is at least one element of $S_f^\perp$ whose $i$'th bit is 1. 
Thus, if we pick an element $y \in S_f^\perp$ at random, each bit of $y$ will be 0 or 1 with equal probability, so the expectation of $d(x,y)$ is exactly $n/2$. Therefore, $d(x,S_f^\perp) \ge n/2$.
\end{proof}

Corollary \ref{cor:nan2} was previously proven in~\cite{montanaro10b} via a different method. We can also show that functions whose nonadaptive exact quantum query complexity is minimal are of very restricted form.

\begin{cor}
Let $f:\{0,1\}^n \rightarrow \{0,1\}$ be a boolean function that depends on all $n$ input bits and such that $Q_E^{na}(f)=n/2$. Then there exists an $x$ such that $d(x,y)=n/2$ for all $y \in S_f^\perp$.
\end{cor}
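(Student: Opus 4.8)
The plan is to combine the characterisation in Theorem \ref{thm:na} with the averaging argument already used to prove Corollary \ref{cor:nan2}. By Theorem \ref{thm:na}, the hypothesis $Q_E^{na}(f)=n/2$ says precisely that $\min_{x} \max_{y \in S_f^\perp} d(x,y) = n/2$. First I would let $x^*$ be a bit-string attaining this minimum; by definition this yields the upper bound $d(x^*,y) \le n/2$ for \emph{every} $y \in S_f^\perp$. The whole task is then to upgrade this inequality to an equality, and the idea is that an optimal $x^*$ cannot have any $y$ strictly closer than $n/2$ without dragging the average below $n/2$.

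Next I would show that the average of $d(x^*,y)$ over $y \in S_f^\perp$ is exactly $n/2$. This is the same computation that underlies Corollary \ref{cor:nan2}: since $f$ depends on all $n$ input bits, $e_i \notin S_f$ for each $i$, so the coordinate functional $y \mapsto y_i$ is a non-trivial linear map on the subspace $S_f^\perp$ and hence vanishes on exactly half of $S_f^\perp$. Consequently, for a uniformly random $y \in S_f^\perp$, each bit $y_i$ is uniform over $\{0,1\}$, so for the fixed $x^*$ we obtain $\E_y[d(x^*,y)] = \sum_{i=1}^n \Pr[y_i \neq x^*_i] = n/2$, the average being taken over $y$ ranging uniformly over the subspace $S_f^\perp$.

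Finally I would combine the two facts. Every term $d(x^*,y)$ is at most $n/2$, yet the mean of these terms over $y \in S_f^\perp$ equals $n/2$; a finite collection of real numbers whose mean equals their common upper bound must all equal that bound. Applying this to the set $\{ d(x^*,y) : y \in S_f^\perp \}$ gives $d(x^*,y) = n/2$ for all $y \in S_f^\perp$, so $x = x^*$ is the required bit-string.

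I do not expect a genuine obstacle here, as both ingredients are already in hand: the upper bound is immediate from Theorem \ref{thm:na} once an optimal $x^*$ is fixed, and the exact value $n/2$ of the average is exactly what was established in Corollary \ref{cor:nan2}. The only points requiring slight care are ensuring the averaging is performed over the subspace $S_f^\perp$ itself (so that the non-triviality of each coordinate functional really makes every bit balanced), and correctly invoking the elementary fact that a mean attaining the maximal allowed value forces every entry to be extremal.
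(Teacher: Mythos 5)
Your proposal is correct and follows essentially the same route as the paper: the paper's proof likewise takes an $x$ witnessing $Q_E^{na}(f)=n/2$ (so $d(x,y)\le n/2$ for all $y\in S_f^\perp$), recalls from the proof of Corollary \ref{cor:nan2} that $\E_{y\in S_f^\perp}\,d(x,y)=n/2$, and concludes that all the distances must equal $n/2$. The only difference is presentational: you spell out the balancedness of each coordinate on $S_f^\perp$ explicitly, whereas the paper simply cites the earlier proof.
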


\begin{proof}
By the proof of Corollary \ref{cor:nan2}, $\E_{y \in S_f^\perp} d(x,y) = n/2$. Thus, if $d(x,y) \le n/2$ for {\em all} $y \in S_f^\perp$, we must have $d(x,y) = n/2$ for all $y \in S_f^\perp$.
\end{proof}


\subsection{Symmetric boolean functions}

It turns out that we can apply Theorem \ref{thm:na} to completely characterise the nonadaptive exact quantum query complexity of symmetric boolean functions, via the following tetrachotomy.

\begin{thm}
\label{thm:tetrachotomy}
Let $f:\{0,1\}^n \rightarrow \{0,1\}$ be symmetric. Then exactly one of the following four possibilities is true.
\begin{enumerate}
\item $f$ is constant and $Q_E^{na}(f)=0$.
\item $f$ is the PARITY function or its negation and $Q_E^{na}(f)=\lceil n/2 \rceil$.
\item $f$ satisfies $f(x) = f(\bar{x})$ (but is not constant, the PARITY function or its negation) and $Q_E^{na}(f) = n-1$.
\item $f$ is none of the above and $Q_E^{na}(f) = n$.
\end{enumerate}
\end{thm}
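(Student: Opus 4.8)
The plan is to establish the tetrachotomy by combining Theorem~\ref{thm:na} with a classification of the invariance subspace $S_f$ for symmetric $f$. The central observation is that $Q_E^{na}(f)$ depends only on $S_f$, so the four cases should correspond to the four possibilities for $S_f$: namely $S_f = \{0,1\}^n$ (constant), $S_f$ containing a low-weight nonzero vector (forcing PARITY), $S_f = \{0^n, 1^n\}$ (the $f(x)=f(\bar x)$ case), and $S_f = \{0^n\}$ (everything else). First I would dispose of the trivial case: if $f$ is constant then $S_f = \{0,1\}^n$, so $S_f^\perp = \{0^n\}$ and Theorem~\ref{thm:na} gives $Q_E^{na}(f) = \min_x d(x, 0^n) = 0$, which is case~1.

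The main structural step is to show that for symmetric \emph{non-constant} $f$, if $S_f$ contains any nonzero element then $f$ is either PARITY (or its negation) or satisfies $f(x)=f(\bar x)$. This is exactly the content of the two commented-out lemmas in the source, which I would reinstate as the engine of the proof. Writing $f(x) = g(|x|)$ and taking $a \in S_f$ of weight $j$, the key computation is that for inputs $x$ consisting of $k$ ones among the first $j$ coordinates and $\ell$ ones among the remaining $n-j$ coordinates, invariance gives $g(k+\ell) = g(j-k+\ell)$; varying $k,\ell$ forces $g$ to depend only on the parity of its argument whenever $j$ is even, and forces $g$ constant whenever $j$ is odd. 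Hence any nonzero $a \in S_f$ with $|a| \le n/2$ that is odd-weight makes $f$ constant (contradiction), and even-weight makes $f = $ PARITY or its negation. Using $x \in S_f \Rightarrow \bar x = x + 1^n$ relations (since $S_f$ is a subspace and $1^n \in S_f^\perp{}^\perp$ considerations), one reduces any high-weight element to a low-weight one via $x + y \in S_f$, so the only surviving possibility for a non-PARITY, non-constant symmetric function is $S_f \subseteq \{0^n, 1^n\}$.

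With this classification in hand the query counts follow quickly. For PARITY, $S_f = \{0^n, 1^n\}$ when $n$ is odd and $S_f$ is the even-weight subspace when $n$ is even; in either case one computes $S_f^\perp$ and applies Theorem~\ref{thm:na} to get the radius $\lceil n/2 \rceil$, matching case~2. For the $f(x) = f(\bar x)$ case with $f$ not PARITY/constant, we have exactly $S_f = \{0^n, 1^n\}$, so $S_f^\perp$ is the even-weight code; Corollary~\ref{cor:even} already gives $Q_E^{na}(f) \le n-1$, and since $f$ depends on all bits (being non-constant and symmetric), Corollary~\ref{cor:nan2} rules out anything below $\lceil n/2\rceil$ — but I would sharpen this to exactly $n-1$ by exhibiting that the even-weight code has radius $n-1$ (the maximum distance from a suitable $x$ to any even-weight string is $n-1$, not $n$, yet cannot be reduced further), giving case~3. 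Finally, if $S_f = \{0^n\}$ then $S_f^\perp = \{0,1\}^n$ and Corollary~\ref{cor:nan} gives $Q_E^{na}(f) = n$, which is case~4.

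The step I expect to be the main obstacle is the structural classification lemma, specifically handling the full range $0 \le |x| \le n$ of the symmetric argument: the identity $g(k+\ell) = g(j-k+\ell)$ only directly constrains $g$ on the range $\{0,\dots,2j-1\}$ where the relevant inputs exist, and extending the conclusion to the tail $\{2j, \dots, n\}$ requires a separate sliding argument (choosing $k = (j-1)/2$ or $k=j/2-1$ depending on the parity of $j$) to propagate either constancy or the parity-dependence all the way up. Getting the ranges and the parity split right, and then correctly arguing that a high-weight element of $S_f$ can always be converted into a controllable low-weight one using closure of $S_f$ under addition, is the delicate part; everything downstream is a routine application of the radius computation from Theorem~\ref{thm:na}.
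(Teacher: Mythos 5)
Your overall architecture is the same as the paper's: classify the invariance subspace $S_f$ of a symmetric $f$, then read off the query counts from Theorem~\ref{thm:na} and its corollaries (the downstream radius computations, including the sharpening of case~3 to exactly $n-1$, are fine). The difference, and the problem, is in how you prove the classification. The paper's Lemma~\ref{lem:charpar} is proved by Fourier analysis (via Fact~\ref{fact:zeroinf}): a single argument shows that \emph{any} $a \in S_f$ with $1 \le |a| \le n-1$ forces $\hat f$ to vanish outside $\{0^n,1^n\}$, hence forces $f$ to be constant or PARITY, with no case split on $|a|$. You instead reinstate the commented-out combinatorial lemma, which only treats $|a| \le n/2$, and propose to handle larger weights by ``reducing a high-weight element to a low-weight one via $x+y \in S_f$.'' That reduction does not work. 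First, it needs two distinct nonzero elements of $S_f$ of weight strictly between $n/2$ and $n$, so it says nothing about the case $S_f = \{0^n, a\}$ with $n/2 < |a| < n$ --- which is precisely the leftover case the authors flag at the end of their abandoned draft (``the only case we are left with is when $S_f=\{0^n,x\}$ for some $x$ with $|x|>n/2$''). Second, even when two such elements $x,y$ exist, $|x+y| = |x|+|y|-2|x\wedge y| \le 2n-|x|-|y| \le n-2$, which is \emph{not} in general $\le n/2$, so you do not land back in the range your lemma covers. The parenthetical justification via ``$1^n \in S_f^{\perp\perp}$ considerations'' does not help: $S_f^{\perp\perp}=S_f$, and $1^n \in S_f$ is exactly the hypothesis of case~3, not something you may assume. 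To close the gap you must either redo the $g(k+\ell)=g(j-k+\ell)$ index gymnastics for $n/2 < j \le n-1$ (this is possible: $\ell=0$ gives $g(k)=g(j-k)$, combining with $\ell=1$ gives $g(i)=g(i+2)$ on $\{0,\dots,j-1\}$, and $k=0$ gives $g(\ell)=g(j+\ell)$ to cover the tail), or simply use the paper's Fourier-analytic proof, which avoids the split entirely.

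A secondary slip: for PARITY, $S_f$ is always the even-weight subspace (a string $z$ is an invariance iff $|z|$ is even), not $\{0^n,1^n\}$ when $n$ is odd. Your conclusion $Q_E^{na}(\mathrm{PARITY}) = \lceil n/2\rceil$ is still correct, since $S_f^\perp = \{0^n,1^n\}$ and $\min_x \max\{|x|, n-|x|\} = \lceil n/2\rceil$, but the intermediate claim should be fixed.
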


We will prove Theorem \ref{thm:tetrachotomy} using the following lemma, whose proof is given afterwards.

\begin{lem}
\label{lem:charpar}
Let $f:\{0,1\}^n \rightarrow \{0,1\}$ be symmetric and satisfy $f(x) = f(x+a)$ for all $x \in \{0,1\}^n$, for some $a$ with $1 \le |a| \le n-1$. Then, if $|a|$ is odd, $f$ is constant. If $|a|$ is even, $f$ is constant, PARITY or its negation.
\end{lem}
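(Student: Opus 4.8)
The plan is to exploit that a symmetric $f$ is a function of Hamming weight alone, so that the hypothesis $f(x)=f(x+a)$ becomes a system of equations on a single-variable function $g$, which I can then solve explicitly.

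First I would reduce to a canonical translation. Since $f$ is symmetric, for every permutation $\pi$ of the coordinates we have $f(x)=f(\pi x)$; applying the hypothesis $f(y)=f(y+a)$ at $y=\pi^{-1}x$ and using symmetry twice gives $f(x)=f(x+\pi(a))$ for all $x$. As the weight-$j$ strings (where $j=|a|$) are exactly the permutations of $1^j 0^{n-j}$, I may assume $a=1^j 0^{n-j}$. Writing $f(x)=g(|x|)$ with $g:\{0,\dots,n\}\to\{0,1\}$, I then evaluate the hypothesis on inputs of the form $x=1^k 0^{j-k} 1^\ell 0^{n-j-\ell}$ (for $0\le k\le j$, $0\le\ell\le n-j$), for which $|x|=k+\ell$ and $|x+a|=(j-k)+\ell$. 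This yields the master relation $g(k+\ell)=g(j-k+\ell)$ over the stated ranges.

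The engine of the argument is to collapse this relation to a two-periodicity statement. Comparing the master relation at $(k,\ell)$ with its instance at $(k+1,\ell-1)$ gives $g(j-k+\ell)=g(j-k-2+\ell)$; setting $p=j-k+\ell$ and checking that every $p\in\{2,\dots,n\}$ arises from an admissible pair, I obtain $g(p)=g(p-2)$ for all $2\le p\le n$. Hence $g$ depends only on the parity of its argument, so $f$ is either constant or equals $[\,|x|\text{ odd}\,]$ up to negation, i.e.\ PARITY or its negation. The step I expect to be the main obstacle is precisely this range bookkeeping: I must confirm that the admissibility windows for the two instances overlap enough that $g(p)=g(p-2)$ holds on the full range $p\in\{2,\dots,n\}$, which is exactly where the hypotheses $|a|\ge 1$ and $|a|\le n-1$ (guaranteeing both the length-$j$ and length-$(n-j)$ blocks are nonempty) get used.

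Finally I would invoke the parity of $j$ to pin down which of the surviving cases can occur. Since $\mathrm{PARITY}(x+a)=\mathrm{PARITY}(x)\oplus[\,|a|\text{ odd}\,]$, the PARITY function and its negation are invariant under translation by $a$ if and only if $|a|$ is even. Thus when $|a|$ is odd the case ``$f$ is PARITY or its negation'' is excluded, forcing $f$ to be constant; when $|a|$ is even all three possibilities remain, which is exactly the statement of the lemma.
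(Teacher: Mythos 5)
Your proof is correct, but it takes a genuinely different route from the paper's. The paper proves this lemma by Fourier analysis over $\Z_2^n$: it invokes the fact that translation-invariance under $a$ forces $\hat{f}(b)=0$ whenever $|a\wedge b|$ is odd, uses symmetry of $f$ to promote each such vanishing to all Fourier coefficients of the same Hamming weight, and then exhibits sets $b$ with odd overlap with $\supp(a)$ of every weight from $1$ up to $n$ (if $|a|$ is odd) or $n-1$ (if $|a|$ is even), leaving only the constant functions, respectively the constants together with PARITY and its negation. You instead work directly with the weight function $g$, derive the master relation $g(k+\ell)=g(j-k+\ell)$, and difference two admissible instances to get the two-periodicity $g(p)=g(p-2)$ on $\{2,\dots,n\}$; your range bookkeeping is sound, since with $0\le k\le j-1$ and $1\le\ell\le n-j$ the quantity $p=j-k+\ell$ is a sum of an element of $\{1,\dots,j\}$ and one of $\{1,\dots,n-j\}$, and both intervals are nonempty precisely because $1\le|a|\le n-1$. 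Your final step, ruling out PARITY when $|a|$ is odd via $\mathrm{PARITY}(x+a)=\mathrm{PARITY}(x)\oplus[\,|a|\ \text{odd}\,]$, correctly sharpens the conclusion to constancy in that case. What each approach buys: yours is elementary and self-contained, needing no Fourier machinery; the paper's is shorter once Fact \ref{fact:zeroinf} is available and identifies exactly which characters survive, which meshes with the Fourier-flavoured analysis of $S_f$ and $S_f^\perp$ elsewhere in Section \ref{sec:nonadapt}. (Your argument is in fact close in spirit to an earlier combinatorial draft of this lemma that the authors left commented out in the source.)
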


\begin{proof}[Proof of Theorem \ref{thm:tetrachotomy}]
First assume there is an $a$ with $1 \le |a| \le n-1$ such that $f(x) = f(x+a)$ for all $x \in \{0,1\}^n$. Then, by Lemma \ref{lem:charpar}, $f$ is constant, PARITY or its negation. If $f$ is constant then clearly $Q_E^{na}(f)=0$. If $f$ is PARITY or its negation, by the result~\cite{farhi98a} of Farhi et al.\ $Q_E^{na}(f)=\lceil n/2 \rceil$. On the other hand, if there is no $a$ with $1 \le |a| \le n-1$ such that $f(x) = f(x+a)$ for all $x \in \{0,1\}^n$, but there is such an $a$ with $|a|=n$, $f(x) = f(\bar{x})$ and by Corollary \ref{cor:even} $Q_E^{na}(f) = n-1$. Finally, if there is no $a \neq 0^n$ such that $f(x) = f(x+a)$ for all $x \in \{0,1\}^n$, by Corollary \ref{cor:nan} $Q_E^{na}(f)=n$.
\end{proof}

It will be convenient to prove Lemma \ref{lem:charpar} using Fourier analysis, based on the following well-known fact.

\begin{fact}
\label{fact:zeroinf}
For any $f:\{0,1\}^n \rightarrow \R$ and for any $a \in \{0,1\}^n$, if $f(x) = f(x+a)$ for all $x \in \{0,1\}^n$, then for all $b$ such that $|a \wedge b|$ is odd, $\hat{f}(b)=0$.
\end{fact}

\begin{proof}[Proof of Lemma \ref{lem:charpar}]
Note that, because $f$ is symmetric, if $\hat{f}(s)=0$ for some $s$ with $|s|=k$, $\hat{f}(t)=0$ for {\em all} $t$ with $|t|=k$. Without loss of generality, assume $a$ consists of $j$ ones followed by $n-j$ zeroes (i.e.\ is of the form $1\dots10\dots0$). Consider the bit-string $s$ which is 1 on the set $\{1,\dots,2k+1\}$ for some $0 \le k\le (j-1)/2$, and the set $\{n-\ell+1,\dots,\ell\}$, for some $1 \le \ell \le j$. By Fact \ref{fact:zeroinf}, for any such bit-string $s$, $\hat{f}(s)=0$. By varying $k$ and $\ell$, we can vary $|s|$ arbitrarily between 1 and either $n$ (if $|a|$ is odd), or $n-1$ (if $|a|$ is even). Thus, if $|a|$ is odd, $\hat{f}(s)=0$ for all $s \neq 0^n$, so $f$ is constant. Otherwise, if $|a|$ is even, $\hat{f}(s)=0$ for all $s \notin \{0^n,1^n\}$. The only boolean functions satisfying this are constant functions, PARITY and its negation.
\end{proof}


\section{Open problems}
\label{sec:open}

It is a very tempting conjecture that $Q_E(\text{EXACT}_k) = \max\{k,n-k\}$. It is easy to see that the lower bound $Q_E(\text{EXACT}_k) \ge \max\{k,n-k\}$ holds; by setting $\min\{k,n-k\}$ input bits to 0 we obtain a function equivalent to the AND function on $\ell := \max\{k,n-k\}$ bits, which has exact quantum query complexity $\ell$. So it would suffice to prove the upper bound $Q_E(\text{EXACT}_{n/2}) \le n/2$ for all even $n$ to prove this conjecture. To see this, note that for any $k \le n/2$, an algorithm for $\text{EXACT}_{k}$ on $n$ bits can be obtained from an algorithm for $\text{EXACT}_{n-k}$ on $2(n-k)$ bits simply by appending $n-2k$ bits set to 1; the case $k \ge n/2$ is similar.


Following the completion of this work, Ambainis has shown the existence of a total boolean function $f$ such that $Q_E(f) = O(D(f)^{0.8675\dots})$~\cite{ambainis12b}. It remains open to determine the optimal separation between quantum and classical exact query complexity; we are hopeful that the numerical techniques used in this paper may prove helpful in resolving this question, and in finding new examples of functions which demonstrate quantum-classical separations.


\section*{Acknowledgements}

AM was supported by an EPSRC Postdoctoral Research Fellowship and would like to thank Scott Aaronson and Dan Shepherd for comments. We would also like to thank two anonymous referees for their helpful suggestions. Special thanks to Andris Ambainis and Andrey Vihrov for pointing out an error in the algorithm of Section \ref{sec:balanced} in an earlier version of this paper.


\appendix

\section{Numerical results for functions on up to 6 bits}
\label{sec:numerical}

In this appendix we collate our numerical results concerning the optimal success probability achievable by quantum algorithms for all boolean functions on 4 input bits, and symmetric boolean functions on 5 and 6 input bits. We split the results into sections according to the number of bits on which the functions depend. Note that each section on functions of $k$ bits does not include functions which only depend on fewer than $k$ bits. In the following tables, entries are starred when there is a nonadaptive exact quantum algorithm  using that number of queries (see Section \ref{sec:nonadapt}). An entry ``1'' means that the SDP solver claims a solution with success probability greater than $0.999$; note that this does not strictly speaking imply the existence of an {\em exact} algorithm using that number of queries. We use the notation SYM$(c_0,\dots,c_n)$ to mean the symmetric function $f$ such that $f(x) = c_{|x|}$. In the tables of symmetric functions, we simply identify each function with the vector $(c_0,\dots,c_n)$. Note that for all non-constant symmetric $f$, the decision tree complexity $D(f) = n$, but this is not the case for $Q_E(f)$. For most functions $f$ on 4 bits, $D(f)$ is easily verified to be 4 via polynomial degree arguments; we calculated $D(f)$ for the remaining functions $f$ using the algorithm of~\cite{guijarro99}.

We also showed numerically that there exists no boolean function $f$ on up to 5 bits such that $Q_E(f) < D(f)/2$. There are too many functions on 5 bits to iterate through them na\"ively, so we used the following procedure. Any function $f$ on 5 bits such that $Q_E(f) \le 2$ can be obtained by setting $f(x) = (1-x_1)f_0(x_2,\dots,x_5) + x_1 f_1(x_2,\dots,x_5)$, where $f_0$ and $f_1$ are boolean functions on 4 bits such that $Q_E(f_0) \le 2$, $Q_E(f_1) \le 2$. From the previous numerical results there are 25 boolean functions $f$ on 4 bits, up to isomorphism, which have $Q_E(f) \le 2$. Generating and combining all functions isomorphic to these gives a large number of functions to test. However, the size of this list can be reduced by discounting all functions $f$ on 5 bits produced such that $\deg(f)=5$, and then running the efficient algorithm of~\cite{guijarro99} to discount all remaining functions $f$ such that $D(f) \le 4$. We are left with 13,608 candidate functions, none of which turn out to have $Q_E(f) \le 2$.


\subsection{Functions of 4 bits}

\begin{center}
\begin{tabular}{|c|c|c|c|c|c|}
\hline {\bf ID} & {\bf Function} & {\bf 1 query} & {\bf 2 queries} & {\bf 3 queries} & {\bf D(f)} \\
\hline 
1 & $x_1 \wedge x_2 \wedge x_3 \wedge x_4$ & 0.735 & 0.962 & 0.996 & 4\\ 
6 & & 0.654 & 0.931 & 1* & 4\\ 
7 & & 0.750 & 0.954 & 1 & 4\\ 
22 & & 0.572 & 0.906 & 1 & 4\\ 
23 & & 0.667 & 0.926 & 1 & 4\\ 
24 & $x_1 \wedge \neg \text{NAE}(\bar{x}_2,x_3,x_4)$ & 0.654 & 0.931 & 1* & 4 \\ 
25 & & 0.640 & 0.961 & 1 & 4\\ 
27 & $x_1 \wedge \text{SEL}(x_4,x_2,x_3)$ & 0.667 & 0.965 & 1 & 3\\ 
30 & & 0.600 & 0.956 & 1 & 4\\ 
31 & & 0.718 & 0.970 & 1 & 4\\ 
61 & & 0.643 & 0.976 & 1 & 4\\ 
105 & & 0.500 & 0.900 & 1* & 4\\ 
107 & & 0.571 & 0.941 & 1 & 4\\ 
111 & & 0.662 & 0.968 & 1* & 4\\ 
126 & $x_1 \wedge \text{NAE}(x_2,x_3,x_4)$ & 0.667 & 0.947 & 1* & 4\\ 
127 & & 0.727 & 0.972 & 1 & 4\\ 
278 & EXACT$_3$ & 0.529 & 0.884 & 1 & 4\\ 
279 & Th$_3$ & 0.643 & 0.900 & 1 & 4\\ 
280 & & 0.572 & 0.906 & 1 & 4\\ 
281 & & 0.600 & 0.956 & 1 & 4\\ 
282 & & 0.571 & 0.936 & 1 & 4\\ 
283 & & 0.637 & 0.959 & 1 & 4\\ 
286 & & 0.546 & 0.932 & 1 & 4\\ 
287 & & 0.659 & 0.945 & 1 & 4\\ 
300 & & 0.571 & 0.936 & 1 & 4\\ 
301 & & 0.572 & 0.964 & 1 & 4\\ 
303 & SEL$(x_3,x_1 \wedge x_2,\text{SEL}(x_4,x_1,x_2))$ & 0.644 & 0.966 & 1 & 3\\ 
316 & & 0.562 & 0.962 & 1 & 4\\ 
317 & SEL$(x_3,x_1 \wedge x_2,\text{SEL}(x_2,x_1,x_4))$ & 0.572 & 0.980 & 1 & 3\\ 
318 & & 0.546 & 0.956 & 1 & 4\\ 
319 & & 0.640 & 0.972 & 1 & 4\\ 
360 & & 0.529 & 0.884 & 1 & 4\\ 
361 & & 0.500 & 0.916 & 1 & 4\\
362 & & 0.546 & 0.932 & 1 & 4\\ 
363 & & 0.546 & 0.955 & 1 & 4\\ 
366 & & 0.546 & 0.956 & 1 & 4\\ 
367 & & 0.571 & 0.969 & 1 & 4\\ 
382 & & 0.546 & 0.923 & 1 & 4\\ 
383 & & 0.600 & 0.946 & 1 & 4\\ 
384 & NAE$(\bar{x}_1,x_2,x_3,x_4)$ & 0.800 & 0.980 & 1* & 4\\ 
385 & & 0.750 & 0.954 & 1 & 4\\ 
386 & & 0.640 & 0.961 & 1 & 4\\ 
\hline 
\end{tabular}
\end{center}

\begin{center}
\begin{tabular}{|c|c|c|c|c|c|}
\hline {\bf ID} & {\bf Function} & {\bf 1 q.} & {\bf 2 qs.} & {\bf 3 qs.} & {\bf D(f)}\\
\hline 
387 & & 0.667 & 0.965 & 1 & 4\\ 
390 & & 0.571 & 0.936 & 1 & 4\\ 
391 & & 0.637 & 0.959 & 1 & 4\\ 
393 & SEL$(x_3,x_1\wedge \bar{x}_4,x_2 \wedge x_4)$ & 0.667 & 0.965 & 1 & 3\\ 
395 & & 0.724 & 0.963 & 1 & 4\\ 
399 & & 0.751 & 0.980 & 1 & 4\\ 
406 & & 0.500 & 0.916 & 1 & 4\\ 
407 & & 0.572 & 0.940 & 1 & 4\\ 
408 & & 0.600 & 0.956 & 1 & 4\\ 
409 & & 0.643 & 0.976 & 1 & 4\\ 
410 & & 0.572 & 0.964 & 1 & 4\\ 
411 & & 0.656 & 0.969 & 1 & 4\\ 
414 & & 0.546 & 0.955 & 1 & 4\\ 
415 & & 0.642 & 0.965 & 1 & 4\\ 
424 & & 0.667 & 0.926 & 1 & 4\\ 
425 & & 0.637 & 0.959 & 1 & 4\\ 
426 & & 0.718 & 0.970 & 1 & 4\\ 
427 & SEL$(x_3,x_1 \wedge \bar{x}_4,\text{SEL}(x_4,x_1,x_2))$ & 0.751 & 0.980 & 1 & 3\\ 
428 & & 0.637 & 0.959 & 1 & 4\\ 
429 & SEL$(x_2,x_1 \wedge \bar{x}_4,\text{SEL}(x_3,x_1,x_4))$ & 0.656 & 0.969 & 1 & 3\\ 
430 & & 0.644 & 0.966 & 1 & 4\\ 
431 & & 0.710 & 0.977 & 1 & 4\\ 
444 & & 0.572 & 0.980 & 1 & 4\\ 
445 & & 0.641 & 0.965 & 1 & 4\\ 
446 & & 0.572 & 0.969 & 1 & 4\\ 
447 & & 0.667 & 0.980 & 1 & 4\\ 
488 & & 0.643 & 0.900 & 1 & 4\\ 
489 & & 0.572 & 0.940 & 1 & 4\\ 
490 & & 0.659 & 0.945 & 1 & 4\\ 
491 & & 0.642 & 0.965 & 1 & 4\\ 
494 & & 0.640 & 0.972 & 1 & 4\\ 
495 & SEL$(x_3,x_1,\text{SEL}(x_4,x_1,x_2))$ & 0.667 & 0.980 & 1 & 3\\ 
510 & & 0.600 & 0.946 & 1 & 4\\ 
829 & & 0.563 & 0.975 & 1 & 4\\ 
854 & & 0.598 & 0.955 & 1 & 4\\ 
855 & & 0.714 & 0.969 & 1 & 4\\ 
856 & & 0.572 & 0.964 & 1 & 4\\ 
857 & & 0.579 & 0.961 & 1 & 4\\ 
858 & SEL$(x_1,x_2 \wedge x_3,x2 \oplus x_4)$ & 0.572 & 0.980 & 1 & 3\\ 
859 & & 0.628 & 0.974 & 1 & 4\\ 
862 & & 0.572 & 0.966 & 1 & 4\\ 
863 & SEL$(x_1,x_2 \wedge x_3,x2 \vee x_4)$ & 0.667 & 0.986 & 1 & 3\\ 
872 & & 0.546 & 0.932 & 1 & 4\\ 
873 & & 0.500 & 0.946 & 1 & 4\\ 
874 & & 0.598 & 0.955 & 1 & 4\\ 
875 & & 0.572 & 0.951 & 1 & 4\\ 
876 & & 0.546 & 0.956 & 1 & 4\\ 
877 & & 0.545 & 0.961 & 1 & 4\\ 
\hline 
\end{tabular}
\end{center}

\begin{center}
\begin{tabular}{|c|c|c|c|c|c|}
\hline {\bf ID} & {\bf Function} & {\bf 1 q.} & {\bf 2 qs.} & {\bf 3 qs.} & {\bf D(f)}\\
\hline 
878 & & 0.572 & 0.966 & 1 & 4\\ 
879 & & 0.600 & 0.966 & 1 & 4\\ 
892 & & 0.563 & 0.975 & 1 & 4\\ 
893 & & 0.571 & 0.966 & 1 & 4\\ 
894 & & 0.572 & 0.947 & 1 & 4\\ 
961 & & 0.718 & 0.970 & 1 & 4\\ 
965 & SEL$(x_2,x_1\wedge\bar{x}_3,\text{SEL}(x_1,x_3,x_4))$ & 0.751 & 0.980 & 1 & 3\\ 
966 & & 0.644 & 0.966 & 1 & 4\\ 
967 & & 0.710 & 0.977 & 1 & 4\\ 
980 & & 0.659 & 0.945 & 1 & 4\\ 
981 & & 0.714 & 0.969 & 1 & 4\\ 
982 & & 0.572 & 0.951 & 1 & 4\\ 
983 & & 0.661 & 0.965 & 1 & 4\\ 
984 & SEL$(x_1,x_2\wedge x_3,\text{SEL}(x_4,\bar{x}_3,\bar{x}_2))$ & 0.644 & 0.966 & 1 & 3\\ 
985 & & 0.628 & 0.974 & 1 & 4 \\ 
987 & SEL$(x_4,\text{SEL}(x_3,x_1,x_2),\text{SEL}(x_2,x_1,x_3))$ & 0.661 & 0.965 & 1 & 3\\ 
988 & & 0.640 & 0.972 & 1 & 4\\ 
989 & SEL$(x_1,x_2 \wedge x_3,\bar{x}_3 \vee x_4)$ & 0.667 & 0.986 & 1 & 3\\ 
990 & & 0.600 & 0.966 & 1 & 4\\ 
1632 & $(x_1 \oplus x_2) \wedge (x_3 \oplus x_4)$ & 0.667 & 1* & 1* & 4\\ 
1633 & & 0.562 & 0.962 & 1 & 4\\ 
1634 & & 0.643 & 0.976 & 1 & 4\\ 
1635 & & 0.572 & 0.980 & 1 & 4\\ 
1638 & & 0.667 & 0.947 & 1* & 4\\ 
1639 & & 0.641 & 0.965 & 1 & 4\\ 
1641 & & 0.500 & 0.936 & 1* & 4\\ 
1643 & & 0.561 & 0.966 & 1 & 4\\ 
1647 & MAJ$(x_1,x_2,x_3 \oplus x_4)$ & 0.667 & 1 & 1* & 4\\ 
1650 & & 0.656 & 0.969 & 1 & 4\\ 
1651 & & 0.628 & 0.974 & 1 & 4\\ 
1654 & & 0.641 & 0.965 & 1 & 4\\ 
1656 & & 0.546 & 0.956 & 1 & 4\\ 
1657 & & 0.500 & 0.964 & 1 & 4\\ 
1658 & & 0.571 & 0.966 & 1 & 4\\ 
1659 & & 0.571 & 0.962 & 1 & 4\\ 
1662 & & 0.600 & 0.954 & 1 & 4\\ 
1680 & $(x_1 \oplus x_2) \wedge (x_1 \oplus x_3 \oplus x_4)$ & 0.500 & 0.900 & 1* & 4\\ 
1681 & & 0.500 & 0.916 & 1 & 4\\ 
1683 & & 0.500 & 0.946 & 1 & 4\\ 
1686 & & 0.500 & 0.936 & 1* & 4\\ 
1687 & & 0.500 & 0.964 & 1 & 4\\ 
1695 & SEL$(x_3 \oplus x_4,x_1,x_2)$ & 0.500 & 1 & 1* & 3\\ 
1712 & & 0.571 & 0.941 & 1 & 4\\ 
1713 & & 0.546 & 0.955 & 1 & 4\\ 
1714 & & 0.572 & 0.940 & 1 & 4\\ 
1715 & & 0.572 & 0.951 & 1 & 4\\ 
1716 & & 0.546 & 0.955 & 1 & 4\\ 
1717 & & 0.545 & 0.961 & 1 & 4\\ 
\hline 
\end{tabular}
\end{center}

\begin{center}
\begin{tabular}{|c|c|c|c|c|c|}
\hline {\bf ID} & {\bf Function} & {\bf 1 q.} & {\bf 2 qs.} & {\bf 3 qs.} & {\bf D(f)}\\
\hline 
1718 & & 0.561 & 0.966 & 1 & 4\\ 
1719 & SEL$(x_4,\text{SEL}(x_2,x_1,x_3),\text{SEL}(x_3,x_2,x_1))$ & 0.572 & 0.962 & 1 & 3\\ 
1721 & & 0.500 & 0.964 & 1 & 4\\ 
1725 & & 0.529 & 0.955 & 1 & 4\\ 
1776 & & 0.662 & 0.967 & 1* & 4\\ 
1777 & & 0.572 & 0.969 & 1 & 4\\ 
1778 & & 0.642 & 0.965 & 1 & 4\\ 
1782 & SEL$(x_2,x_1,x_3 \oplus x_4)$ & 0.667 & 1 & 1* & 3 \\ 
1785 & $x_1 \oplus (x_2 \wedge (x_3 \oplus x_4))$ & 0.500 & 1 & 1* & 4\\ 
1910 & & 0.600 & 0.954 & 1 & 4\\ 
1912 & & 0.546 & 0.923 & 1 & 4\\ 
1913 & & 0.529 & 0.955 & 1 & 4\\ 
1914 & & 0.572 & 0.947 & 1 & 4\\ 
1918 & & 0.572 & 0.922 & 1 & 4\\ 
1968 & & 0.662 & 0.968 & 1* & 4\\ 
1969 & & 0.642 & 0.965 & 1 & 4\\ 
1972 & & 0.572 & 0.969 & 1 & 4\\ 
1973 & & 0.600 & 0.966 & 1 & 4\\ 
1974 & & 0.572 & 0.962 & 1 & 4\\ 
1980 & SEL$(x_3,\text{SEL}(x_4,x_1,x_2),x_1 \oplus x_2)$ & 0.571 & 0.966 & 1 & 3\\ 
2016 & SEL$(x_1,x_2 \wedge(x_3 \vee x_4), \bar{x}_2 \wedge (\bar{x}_3 \vee \bar{x}_4))$ & 0.773 & 1 & 1* & 4\\ 
2017 & & 0.640 & 0.972 & 1 & 4\\ 
2018 & & 0.710 & 0.977 & 1 & 4\\ 
2019 & & 0.667 & 0.986 & 1 & 4\\ 
2022 & SEL$(x_3,\text{SEL}(x_2,x_1,x_4),\text{SEL}(x_1,x_2,\bar{x}_4))$ & 0.661 & 0.965 & 1 & 3\\ 
2025 & & 0.571 & 0.966 & 1 & 4\\ 
2032 & & 0.727 & 0.971 & 1 & 4\\ 
2033 & & 0.667 & 0.980 & 1 & 4\\ 
2034 & SEL$(x_2,x_1,\text{SEL}(x_4,x_3,\bar{x}_1))$ & 0.667 & 0.980 & 1 & 3\\ 
2040 & & 0.600 & 0.946 & 1 & 4\\ 
5736 & EXACT$_2$ & 0.572 & 1 & 1* & 4\\ 
5737 & SYM(0,0,1,0,1) & 0.500 & 0.962 & 1 & 4\\ 
5738 & & 0.563 & 0.975 & 1 & 4\\ 
5739 & & 0.500 & 0.980 & 1 & 4\\ 
5742 & & 0.572 & 0.947 & 1 & 4\\ 
5758 & & 0.572 & 0.922 & 1 & 4\\ 
5761 & & 0.500 & 0.860 & 1 & 4\\ 
5763 & & 0.500 & 0.907 & 1 & 4\\ 
5766 & & 0.500 & 0.936 & 1* & 4\\ 
5767 & & 0.500 & 0.933 & 1 & 4\\ 
5769 & & 0.500 & 0.907 & 1 & 4\\ 
5771 & & 0.500 & 0.946 & 1 & 4\\ 
5774 & & 0.561 & 0.966 & 1 & 4\\ 
5782 & & 0.500 & 0.962 & 1 & 4\\ 
5783 & & 0.500 & 0.954 & 1 & 4\\ 
5784 & & 0.500 & 0.946 & 1 & 4\\ 
5785 & & 0.500 & 0.933 & 1 & 4\\ 
5786 & & 0.500 & 0.964 & 1 & 4\\ 
\hline 
\end{tabular}
\end{center}

\begin{center}
\begin{tabular}{|c|c|c|c|c|c|}
\hline {\bf ID} & {\bf Function} & {\bf 1 q.} & {\bf 2 qs.} & {\bf 3 qs.} & {\bf D(f)}\\
\hline 
5787 & & 0.500 & 0.955 & 1 & 4\\ 
5790 & SEL$(x_3,\text{SEL}(x_4,x_1,x_2),x_2 \oplus x_4)$ & 0.500 & 0.980 & 1 & 3\\ 
5801 & & 0.500 & 0.933 & 1 & 4\\ 
5804 & & 0.545 & 0.961 & 1 & 4\\ 
5805 & & 0.500 & 0.955 & 1 & 4\\ 
5820 & & 0.529 & 0.955 & 1 & 4\\ 
5865 & & 0.500 & 0.954 & 1 & 4\\ 
6014 & SYM(0,0,1,1,0) & 0.600 & 0.874 & 1 & 4\\ 
6030 & SEL$(x_3,\text{SEL}(x_4,x_1,x_2),\text{SEL}(x_4,x_2,\bar{x}_1))$ & 0.667 & 1 & 1* & 3 \\ 
6038 & & 0.500 & 0.980 & 1 & 4\\ 
6040 & & 0.572 & 0.951 & 1 & 4\\ 
6042 & SEL$(x_4,\text{SEL}(x_3,x_1,x_2),\text{SEL}(x_2,x_3,\bar{x}_1))$ & 0.572 & 0.962 & 1 & 3\\ 
6060 & & 0.600 & 0.966 & 1 & 4\\ 
6120 & $x_1 \oplus \text{MAJ}(x_2,x_3,x_4)$ & 0.667 & 1 & 1* & 4\\ 
6375 & $x_1 \oplus \neg \text{NAE}(\bar{x}_2,x_3,x_4)$ & 0.500 & 0.900 & 1* & 4\\ 
6625 & & 0.500 & 0.946 & 1 & 4\\ 
6627 & & 0.500 & 0.955 & 1 & 4\\ 
6630 & & 0.500 & 0.954 & 1 & 4\\ 
7128 & Sorted input bits~\cite{ambainis06} & 0.854 & 1 & 1* & 3\\ 
7140 & $x_1 \oplus \text{SEL}(x_4,x_2,x_3)$ & 0.500 & 1 & 1* & 3\\ 
7905 & & 0.500 & 0.900 & 1* & 4\\ 
27030 & \text{PARITY} & 0.500 & 1* & 1* & 4\\ 
\hline 
\end{tabular}
\end{center}

\vfill

\subsection{Symmetric functions of 5 bits}

\begin{center}
\begin{tabular}{|c|c|c|c|c|}
\hline {\bf Function} & {\bf 1 query} & {\bf 2 queries} & {\bf 3 queries} & {\bf 4 queries}\\
\hline 
(0,0,0,0,0,1) & 0.693 & 0.925 & 0.988 & 0.999 \\
(0,0,0,0,1,0) & 0.516 & 0.761 & 0.972 & 1 \\
(0,0,0,0,1,1) & 0.640 & 0.798 & 0.974 & 1 \\
(0,0,0,1,0,0) & 0.530 & 0.616 & 1 & 1 \\
(0,0,0,1,0,1) & 0.500 & 0.593 & 0.995 & 1 \\
(0,0,0,1,1,0) & 0.546 & 0.758 & 1 & 1 \\
(0,0,0,1,1,1) & 0.600 & 0.728 & 1 & 1 \\
(0,0,1,0,0,1) & 0.500 & 0.640 & 0.988 & 1 \\
(0,0,1,0,1,0) & 0.500 & 0.517 & 1 & 1 \\
(0,0,1,0,1,1) & 0.500 & 0.534 & 1 & 1 \\
(0,0,1,1,0,0) & 0.600 & 0.874 & 1 & 1* \\
(0,0,1,1,0,1) & 0.500 & 0.856 & 0.998 & 1 \\
(0,0,1,1,1,0) & 0.616 & 0.762 & 0.969 & 1 \\
(0,1,0,0,0,1) & 0.500 & 0.728 & 0.967 & 1 \\
(0,1,0,0,1,0) & 0.500 & 0.860 & 1 & 1* \\
(0,1,0,1,0,1) & 0.500 & 0.500 & 1* & 1* \\
(0,1,0,1,1,0) & 0.500 & 0.616 & 0.998 & 1 \\
(0,1,1,0,0,1) & 0.500 & 0.784 & 0.998 & 1 \\
(0,1,1,1,1,0) & 0.736 & 0.962 & 0.996 & 1* \\
\hline
\end{tabular}
\end{center}

\subsection{Symmetric functions of 6 bits}

\begin{center}
\begin{tabular}{|c|c|c|c|c|c|}
\hline {\bf Function} & {\bf 1 query} & {\bf 2 queries} & {\bf 3 queries} & {\bf 4 queries} & {\bf 5 queries}\\
\hline
(0,0,0,0,0,0,1) & 0.663 & 0.900 & 0.980 & 0.997 & 0.9999\\
(0,0,0,0,0,1,0) & 0.511 & 0.684 & 0.940 & 0.993 & 1\\
(0,0,0,0,0,1,1) & 0.640 & 0.738 & 0.946 & 0.993 & 1\\
(0,0,0,0,1,0,0) & 0.516 & 0.572 & 0.878 & 1 & 1\\
(0,0,0,0,1,0,1) & 0.500 & 0.541 & 0.875 & 0.999 & 1\\
(0,0,0,0,1,1,0) & 0.527 & 0.751 & 0.904 & 1 & 1\\
(0,0,0,0,1,1,1) & 0.589 & 0.710 & 0.901 & 1 & 1\\
(0,0,0,1,0,0,0) & 0.530 & 0.616 & 1 & 1 & 1*\\
(0,0,0,1,0,0,1) & 0.500 & 0.614 & 0.980 & 0.997 & 1\\
(0,0,0,1,0,1,0) & 0.500 & 0.504 & 0.946 & 1 & 1\\
(0,0,0,1,0,1,1) & 0.500 & 0.525 & 0.952 & 1 & 1\\
(0,0,0,1,1,0,0) & 0.546 & 0.667 & 0.864 & 1 & 1\\
(0,0,0,1,1,0,1) & 0.500 & 0.625 & 0.860 & 1 & 1\\
(0,0,0,1,1,1,0) & 0.556 & 0.721 & 0.905 & 1 & 1\\
(0,0,1,0,0,0,1) & 0.500 & 0.583 & 0.882 & 0.997 & 1\\
(0,0,1,0,0,1,0) & 0.500 & 0.527 & 0.839 & 1 & 1 \\
(0,0,1,0,0,1,0) & 0.500 & 0.541 & 0.843 & 1 & 1 \\
(0,0,1,0,1,0,0) & 0.500 & 0.517 & 1 & 1 & 1* \\
(0,0,1,0,1,0,1) & 0.500 & 0.500 & 0.985 & 1 & 1 \\
(0,0,1,0,1,1,0) & 0.500 & 0.520 & 0.940 & 1 & 1 \\
(0,0,1,1,0,0,1) & 0.500 & 0.712 & 0.867 & 1 & 1 \\
(0,0,1,1,0,1,0) & 0.500 & 0.513 & 0.840 & 1 & 1 \\
(0,0,1,1,1,0,0) & 0.616 & 0.762 & 0.969 & 1 & 1* \\
(0,0,1,1,1,0,1) & 0.500 & 0.722 & 0.965 & 1 & 1 \\
(0,0,1,1,1,1,0) & 0.625 & 0.702 & 0.939 & 0.992 & 1 \\
(0,1,0,0,0,0,1) & 0.500 & 0.652 & 0.934 & 0.992 & 1 \\
(0,1,0,0,0,1,0) & 0.500 & 0.728 & 0.967 & 1 & 1* \\
(0,1,0,0,1,0,1) & 0.500 & 0.500 & 0.836 & 1 & 1 \\
(0,1,0,0,1,1,0) & 0.500 & 0.667 & 0.863 & 1 & 1 \\
(0,1,0,1,0,0,1) & 0.500 & 0.500 & 0.934 & 1 & 1 \\
(0,1,0,1,0,1,0) & 0.500 & 0.500 & 1* & 1* & 1* \\
(0,1,0,1,1,1,0) & 0.500 & 0.553 & 0.880 & 0.999 & 1 \\
(0,1,1,0,0,0,1) & 0.500 & 0.758 & 0.908 & 1 & 1 \\
(0,1,1,0,1,1,0) & 0.500 & 0.640 & 0.988 & 1 & 1* \\
(0,1,1,1,1,1,0) & 0.693 & 0.925 & 0.988 & 0.999 & 1* \\
\hline
\end{tabular}
\end{center}


\pagebreak

\section{Source code}
\label{sec:cvxcode}

The following is an example of how the CVX package~\cite{cvx} can be used to determine quantum query complexity. For full source code, see~\cite{websiteqc}. In this case, we calculate the minimal error probability over all quantum algorithms using 2 queries to compute some function $f:\{0,1\}^3 \rightarrow \{0,1\}$ (given as a column vector).

\begin{verbatim}
cvx_begin

  cvx_precision best;

  % variables m_i^j : 0 <= i <= n, 0 <= j <= t-1
  variable m00(8,8) symmetric; variable m10(8,8) symmetric;
  variable m20(8,8) symmetric; variable m30(8,8) symmetric;
  variable m01(8,8) symmetric; variable m11(8,8) symmetric;
  variable m21(8,8) symmetric; variable m31(8,8) symmetric;
    
  variable g0(8,8) symmetric; variable g1(8,8) symmetric;
  
  variable epss;
  
  minimise( epss );
  
  subject to
  
    % Input condition.
    m00 + m10 + m20 + m30 == ones(8,8);
    
    % Running conditions (between 1 and t-1).
    m01 + m11 + m21 + m31 == E0 .* m00 + E1 .* m10 + E2 .* m20 + E3 .* m30;
    
    % Output matches last but one query.
    g0 + g1 == E0 .* m01 + E1 .* m11 + E2 .* m21 + E3 .* m31;
    
    % Output constraints.
    diag(g0) >= (1-epss)*(1-f);
    diag(g1) >= (1-epss)*f;
    
    % Semidefinite constraints.
    m00 == semidefinite(8); m10 == semidefinite(8);
    m20 == semidefinite(8); m30 == semidefinite(8);
    m01 == semidefinite(8); m11 == semidefinite(8);
    m21 == semidefinite(8); m31 == semidefinite(8);
        
    g0 == semidefinite(8); g1 == semidefinite(8);

cvx_end
\end{verbatim}


\bibliographystyle{plain}
\bibliography{../thesis}

\end{document}